\documentclass[conference]{IEEEtran}
\IEEEoverridecommandlockouts

\usepackage{comment}
\usepackage{caption}
\usepackage{varwidth}
\usepackage{amsmath}
\usepackage{textcomp}
\usepackage{hyperref}
\hypersetup{colorlinks=true,citecolor = blue}

\newtheorem{definition}{Definition}

\newtheorem{theorem}{Theorem}

\newtheorem{upper bound}{Upper bound}
\newtheorem{strategy}{Strategy}    
  
\usepackage[noend]{algpseudocode}
\usepackage{graphicx,epstopdf,algpseudocode,caption,url}   
\graphicspath{{./images}}

\usepackage{amssymb} 
\usepackage{array}    
\usepackage{arydshln} 
\usepackage{graphics}
\usepackage{color}   
\usepackage{graphicx}  
\usepackage{graphicx,epstopdf,caption,url}   
\usepackage{multirow} 
\newenvironment{proof}{\begin{IEEEproof}}{\end{IEEEproof}} 
\usepackage[ruled,linesnumbered]{algorithm2e}
\usepackage[switch]{lineno}  	
	
\begin{document}

\title{Discovering High-utility Sequential Rules with Increasing Utility Ratio}

\author{Zhenqiang Ye, Wensheng Gan*, Gengsen Huang, Tianlong Gu, Philip S. Yu,~\IEEEmembership{Life Fellow,~IEEE} 

\thanks{This research was supported in part by the National Natural Science Foundation of China (No. 62272196), Guangzhou Basic and Applied Basic Research Foundation (No. 2024A04J9971), Engineering Research Center of Trustworthy AI, Ministry of Education (Jinan University), and Guangdong Key Laboratory of Data Security and Privacy Preserving. Corresponding author: Wensheng Gan}

\thanks{Zhenqiang Ye, Wensheng Gan, Gengsen Huang, and Tianlong Gu are with the College of Cyber Security, Jinan University, Guangzhou 510632, China. (E-mail: yzq66f@gmail.com, wsgan001@gmail.com, hgengsen@gmail.com, gutianlong@jnu.edu.cn)}
	
\thanks{Philip S. Yu is with the University of Illinois Chicago, Chicago, USA. (E-mail: psyu@uic.edu)} 
}	
		
\maketitle

\begin{abstract}
  Utility-driven mining is an essential task in data science, as it can provide deeper insight into the real world. High-utility sequential rule mining (HUSRM) aims at discovering sequential rules with high utility and high confidence. It can certainly provide reliable information for decision-making because it uses confidence as an evaluation metric, as well as some algorithms like HUSRM and US-Rule. However, in current rule-growth mining methods, the linkage between HUSRs and their generation remains ambiguous. Specifically, it is unclear whether the addition of new items affects the utility or confidence of the former rule, leading to an increase or decrease in their values. Therefore, in this paper, we formulate the problem of mining HUSRs with an increasing utility ratio. To address this, we introduce a novel algorithm called SRIU for discovering all HUSRs with an increasing utility ratio using two distinct expansion methods, including \textit{left-right} expansion and \textit{right-left} expansion. SRIU also utilizes the item pair estimated utility pruning strategy (\textit{IPEUP}) to reduce the search space. Moreover, for the two expansion methods, two sets of upper bounds and corresponding pruning strategies are introduced. To enhance the efficiency of SRIU, several optimizations are incorporated. These include utilizing the Bitmap to reduce memory consumption and designing a compact utility table for the mining procedure. Finally, extensive experimental results from both real-world and synthetic datasets demonstrate the effectiveness of the proposed method. Moreover, to better assess the quality of the generated sequential rules, metrics such as confidence and conviction are employed, which further demonstrate that SRIU can improve the relevance of mining results. By enforcing a strict requirement that each rule extension must increase (or maintain) its utility ratio relative to its parent, SRIU uniquely identifies value-accretive sequential patterns. This capability makes it particularly suitable for decision-critical domains where actionable insights must guarantee progressive value gain—such as personalized e-commerce recommendations, financial risk monitoring, and clinical pathway analysis. The source code is publicly available on GitHub: https://github.com/DSI-Lab1/SRIU.
\end{abstract}
		
\begin{IEEEkeywords}
    utility mining, sequential rule, increasing utility ratio, rule evaluation.
\end{IEEEkeywords}
		
\IEEEpeerreviewmaketitle

\section{Introduction} \label{sec:introduction}

It is well known that data is growing at an exponential rate due to the widespread adoption of digital technologies and the internet \cite{sun2023internet,gan2023web}. A wide variety of applications can generate data that contains important information. However, despite the vast amount of data generated and collected, only a small portion is readily available for analysis. Therefore, employing data mining techniques to extract valuable information and knowledge from the data is imperative. Pattern mining is one of the data mining techniques. In this field, frequency serves as a crucial metric for evaluating mining results. Frequent pattern mining (FPM) \cite{luna2019frequent,fournier2022pattern} is used to discover frequent patterns in a database. That is, if a pattern is frequent, the support of this pattern reaches the minimum support (\textit{minsup}) threshold. FPM has been used successfully in many fields, such as market basket analysis, bioinformatics \cite{birzele2006new}, and social network analysis \cite{moosavi2017community}. However, one of its limitations lies in its inability to reflect the temporal order of events occurring in the real world. A prime example of this limitation is the frequent pattern \{monitor, desktop computer\} observed during the school season on an e-commerce platform. In reality, typical consumer behavior involves purchasing a desktop first and then considering the monitor. If both of them appear in a transaction, it means the pattern in the real world is more like \{desktop computer, monitor\}. To address this issue, sequential pattern mining (SPM) \cite{agrawal1995mining,gan2019survey} has been proposed. SPM can be used to handle the sequence data, which considers the chronological order of items in the database's sequences. As reviewed in \cite{fournier2017survey}, the applications of SPM are more suitable for practical scenarios such as text analysis, webpage click-stream analysis, e-learning, and others.

Although the results of SPM can provide deep insights into the data and aid in strategy development, users are still uncertain about the reliability of patterns mined by the SPM algorithm. For instance, consider the frequent sequential pattern $\alpha$ = \{desktop computer, monitor, brake pads\}. While this pattern may be frequent, recommending brake pads after users have purchased desktops and monitors might seem particularly unusual and inappropriate. This observation suggests that SPM could potentially produce some misleading results in certain contexts. To address this issue, researchers introduced the support-confidence framework to SPM, which is called sequential rule mining (SRM) \cite{fournier2011rulegrowth, fournier2012cmrules,gan2025towards}. SRM aims to find all sequential rules (SRs) that meet predefined thresholds, and the rule format can be presented as $\{X\} \to \{Y\}$. For example, an SR \{computer\} $\to$ \{keyboard\} is a rule whose support and confidence value meet the user-defined \textit{minsup} threshold and the minimum confidence (\textit{minconf}) threshold. Since SRM can more clearly describe the correlation between items, it is widely applied in various fields, such as predicting user adoption and weather forecasts. However, because the frequency indicator is limited, SRM also misses some useful information. In many real-world scenarios, items are characterized by diverse attributes or weights, such as profit margins in business or risk indices in emergency response systems. For example, an SR \{milk, bread\} $\to$ \{champagne\} can provide valuable insights into high-revenue-generating item combinations due to their high transaction volumes. However, some SRs like \{diamond\} $\to$ \{rose\} can also bring a high profit, and SRM will miss them since their support does not satisfy the \textit{minsup} threshold. To obtain a more comprehensive understanding of the relationships between items, it is imperative to consider these attributes. Therefore, the utility concept was introduced to SRM, and the task of high-utility sequential rule mining (HUSRM) \cite{gan2021survey,zida2015efficient,huang2023us} was proposed. HUSRM can discover all HUSRs with high utility values. Nevertheless, HUSRM also has some drawbacks. Given two HUSRs, $r_1$ = \{gamepad\} $\to$ \{computer game\} and $r_2$ = \{gamepad\} $\to$ \{computer game, video\}. $r_2$ can be perceived as an extension of $r_1$, achieved by the inclusion of the item \{video\}. What role does \textit{video} play in this, positive or negative? We cannot know by using the existing HUSRM algorithms.

To address this problem, in this paper, we propose a new algorithm called SRIU that discovers HUSRs under the increasing utility ratio setting. Inspired by some related studies \cite{zida2015efficient,huang2023us,zhang2024totally} in HUSRM, we adopt some efficient pruning strategies and design corresponding data structures. The main contributions of our work are summarized as follows:
		
\begin{itemize}
  \item To the best of our knowledge, there has been no prior research on the increasing relationship between the mined sequential rules. We propose an algorithm called discovering high-utility \underline{S}equential \underline{R}ules with \underline{I}ncreasing \underline{U}tility ratio (SRIU), which mines all satisfied HUSRs in a sequential database.
  
   \item We formulate the problem of mining HUSRs with an increasing utility ratio and then introduce the $e$-index-based method to determine the appropriate expansion strategies during the mining process. It is noteworthy that each expansion strategy entails distinct depth-pruning methods.
 
  \item We conduct extensive experiments on different real and synthetic datasets. Not only is the performance of SRIU tested, but also different expansion strategies are analyzed. The results show the superior performance of SRIU. Furthermore, to illustrate the effectiveness of SRIU, the mining results are evaluated under the indicators of confidence and conviction.
\end{itemize}	
		
The remainder of this paper is organized as follows: Section \ref{sec:relatedWork} briefly reviews the related work about HUSRs. Section \ref{sec:preliminaries} introduces some basic definitions and a problem statement. Section \ref{sec:algorithm} describes the proposed algorithm. Finally, experimental evaluations and analysis are shown in Section \ref{sec:experimental}, and the conclusion is presented in Section \ref{sec:conclusion}.

\section{Related Work}  \label{sec:relatedWork}
\subsection{Sequential Pattern Mining}

Sequential pattern mining (SPM) \cite{gan2019survey,fournier2017survey} was proposed to discover all patterns that satisfy a certain frequency constraint in the sequence database. Compared with frequent itemset mining (FIM), the results generated by SPM can carry more semantic information and provide higher practicability \cite{fournier2022pattern}. There are many representative algorithms. The first algorithm, GSP \cite{srikant1996mining}, employs a strategy of pre-pruning infrequent sequences by utilizing defined adjacency subsequences. However, GSP needs to scan the sequence database multiple times, thus consuming a lot of time and memory. To address this issue, the PrefixSpan algorithm \cite{han2001prefixspan} adopts a projected technique to record growing patterns. Since PrefixSpan needs to construct the projection database for each frequent prefix pattern, it consumes a lot of memory when the size of the sequence database is large or the number of items contained in the sequence database is considerable. The SPADE algorithm is based on the idea of equivalence classes \cite{zaki2001spade}, and the SPAM algorithm \cite{ayres2002sequential} is based on the bitmap structure. There are also many interesting studies of SPM with different constraints and applications, such as e-RNSP \cite{dong2020rnsp}, sc-NSP \cite{gao2023toward}, NTP-Miner \cite{wu2022ntp}, and SCP-Miner \cite{wu2022top}.

\subsection{High-Utility Sequential Pattern Mining}

To further study the semantic information contained in sequential patterns, the concept of utility in SPM tasks was introduced, and high-utility sequential pattern mining (HUSPM) was proposed to discover more realistic patterns. Different from SPM, HUSPM \cite{gan2021survey} cannot utilize the Apriori property \cite{agrawal1995mining} to prune unpromising patterns, which makes HUSPM more difficult than SPM. To reduce the search space, two primary ways can be taken. The first one is to design upper bounds; the other one is to devise suitable structures. Sequence weighted utility (\textit{SWU}) \cite{ahmed2010novel} is the first upper bound with a downward closure feature for HUSPM tasks, which can estimate the actual utility of patterns, thus pruning unpromising patterns in advance. That is, if the \textit{SWU} of a pattern is less than \textit{minUtil} (the minimum utility threshold), then the pattern is impossible to hold high utility. Based on the PrefixSpan, UtilitySpan \cite{ahmed2010novel} uses \textit{SWU} to discover all candidate patterns and then obtains the real HUSPs. The USpan algorithm \cite{yin2012uspan} uses the upper bounds \textit{SWU} and sequence projected utilization (\textit{SPU}) to reduce the search space. However, \textit{SPU} misses some HUSPs, and the lexicographic q-sequence tree (LOS-tree) mentioned in USpan also consumes massive memory. To further improve the efficiency of the algorithm, Wang \textit{et al.} \cite{wang2016efficiently} proposed the HUS-Span algorithm, which adopts two tighter upper bounds, the prefix extension utility (\textit{PEU}) and reduced sequence utility (\textit{RSU}). HUS-Span also utilizes the lexicographic tree to represent the space of HUSPM, but HUS-Span adopts a novel data structure called the utility chain to store the utility information that is stored in the utility matrix in USPan. Gan \textit{et al.} \cite{gan2020proum} proposed the ProUM algorithm, which improves the efficiency of mining HUSPs by combining the projected technique and the novel data structure called utility-array. They further proposed a more efficient algorithm called HUSP-ULL \cite{gan2020fast}, which adopted the utility-linked list (UL-list) to quickly generate the projected sequence databases. Moreover, two pruning strategies, LAR and IIP, are designed to reduce the search space, and HUSP-ULL has better performance in experiments. Other interesting topics about HUSPM have also been studied \cite{gan2021explainable}.

\subsection{Sequential Rule Mining}

Although HUSPM can help users understand the knowledge behind the data, it lacks the ability to reason. To solve this limitation, Zaki \textit{et al.} \cite{zaki2001spade} first introduced the support-confident frame to discover the sequential rules (SRs). A sequential rule consists of an antecedent and a consequent, represented in the form of $\{X\} \to \{Y\}$ and $X \cap Y$ = $\emptyset$. The sequential rule mining (SRM) can be divided into two fields: totally-ordered SRM \cite{lo2009non,pham2014efficient} and partially-ordered SRM \cite{fournier2011rulegrowth, fournier2015mining}. In the totally-ordered SRM, the items of the antecedent and the consequent in sequences must follow the order. As for the partially-ordered SRM, the items in the antecedent or consequent can be unordered. Normally, totally-ordered SRM has stricter constraints, and its application scenarios are more targeted. The partially-ordered SRM tends to interpret probabilistic relationships between two itemsets. Therefore, the partially-ordered SRM is more suitable for prediction or recommendation. There are some algorithms for mining partially-ordered SRs. For example, Fournier-Viger \textit{et al.} proposed CMRules \cite{fournier2012cmrules} and CMDeo \cite{fournier2012cmrules}. Both CMRules and CMDeo are two-phase-based algorithms that use the generate-and-test approach to discover rules. CMDeo introduces two expansion operations and provides some properties to speed up the mining process. A lexicographic ordering is also used by CMDeo to limit some invalid expansions. In contrast to CMDeo, RuleGrowth \cite{fournier2015mining} employs a pattern-growth approach for generating SRs and achieves outstanding performance. Furthermore, ERMiner \cite{fournier2014erminer} uses an equivalence class to discover SRs. The main idea of ERMiner is to continuously merge equivalence classes to obtain longer SRs. Although ERMiner exhibits better performance in terms of running time compared to RuleGrowth, it still consumes a significant amount of memory. Besides the above-mentioned algorithms, there are also some algorithms based on SRM for other topics \cite{li2023mcor}, such as MNSR-Pretree \cite{pham2014efficient}, which aims to discover non-redundant SRs by prefix-tree, and TRuleGrowth \cite{fournier2015mining}, which uses the time period as a constraint to discover SRs from the sliding window technique.

\subsection{High-Utility Sequential Rule Mining}

High-utility sequential rule mining (HUSRM) was first defined by Zida \textit{et al.} \cite{zida2015efficient}, and the proposed algorithm is called HUSRM. Different from the traditional HUSPM task, \cite{zida2015efficient} assumes that the items in a sequence are unique. HUSRM uses the minimum utility threshold to judge whether a sequential rule is a high-utility sequential rule (HUSR). HUSRM uses a rule-growth approach to generate sequential rules and creates suitable utility tables to store key information about HUSRs. After that, building upon HUSRM, Huang \textit{et al.} \cite{huang2023us} proposed the US-Rule algorithm, which includes further REUCP and REURP strategies and an optimized upper bound calculation method. As a result, US-Rule outperforms HUSRM. The HUSRM task can be applied in other aspects, such as dealing with negative sequence data \cite{zhang2020hunsr} and repetitive items \cite{lin2023user}. In addition, Zhang \textit{et al.} \cite{zhang2024totally} proposed the totally-ordered sequential rules for utility maximization. Note that the totally-ordered sequential rules differ from the previous partially-ordered sequential rules \cite{lin2025towards}. By combining the average-utility model, the HUSRM task is extended to identify multiple objectives for high average-utility SRs from human gene expression \cite{segura2022mining}.

\section{Preliminaries and Problem Formulation}
\label{sec:preliminaries}

In this section, we first briefly introduce the concepts and definitions of high-utility sequential rule mining (HUSRM) and then formalize the HUSRM problem with the concept of increasing utility ratio.

\begin{definition}[Sequence database \cite{fournier2022pattern}]
    \rm A sequence database (\textit{SDB}) is composed of multiple sequences and can be written as \textit{SDB} = $\langle$$s_1$, $\cdots$, $s_n$$\rangle$, where $n$ is a positive number greater than 1. In the database, each sequence $s$ has its unique identifier (\textit{sid}) and contains multiple itemsets that can also be denoted as $s$ = $\langle$$I_1$, $\cdots$, $I_m$ $\rangle$ and satisfy $1 \leq m$, $m \in \mathbb{N}^+$. As for an itemset $I$, it is a set of distinct items, denoted as $I$ = $\{i_1$, $\cdots$, $i_g\}$, where $1 \leq g$ and $g \in \mathbb{N}^+$. In the paper, all items in $I$ are ordered according to the lexicographical order, i.e., $\succ_{lex}$. Each item has its corresponding internal utility and external utility, denoted as $q(i)$ and $\textit{eu}(i)$, respectively.
\end{definition}

In alignment with several previous studies \cite{zida2015efficient, huang2023us} on HUSRM, we adopt a similar concept, namely that each item in a sequence appears only once. A sample of \textit{SDB} is shown in Table \ref{tab:database}. We can see that the itemset \{$(c:4)$ $(e:2)$\} appears in $s_1$, and items $c$ and $e$ satisfy the lexicographical order $c \succ_{lex} e$. $q(c)$ = 4 and $q(e)$ = 2. Besides, the details of $eu$ are shown in Table \ref{tab:eu}, where $eu(c)$ = 4 and $eu(e)$ = 1.

\begin{table}[!ht]
	\centering
	\caption{A sample of sequence database}
	\label{tab:database}
	\begin{tabular}{cc}
		\hline
		\textbf{\textit{sid}} & \textbf{\textit{sequence}} \\
		\hline
		$s_1$ & $\langle \{(a:1)(b:4)\}, \{(c:4)(e:2)\}, \{(f:9)\},\{(d:5)\} \rangle$ \\
		$s_2$ & $\langle \{(a:2)(c:3)\}, \{(b:4)\}, \{(e:5)\},\{(f:4)\} \rangle$ \\
		$s_3$ & $\langle \{(b:2)(d:4)\}, \{(a:3)\}, \{(c:3)\},\{(e:2)\} \rangle$ \\
		$s_4$ & $\langle \{(d:3)(f:2)\}, \{(b:4)(e:1)\}, \{(g:15)\},\{(c:6)\} \rangle$ \\
		$s_5$ & $\langle \{(b:1)\}, \{(d:1)\}, \{(g:2)\} \rangle$ \\
		\hline
	\end{tabular}
\end{table}

\begin{table}[h]
	\centering
	\caption{The external utility of each item}
	\label{tab:eu}
	\begin{tabular}{cccccccc}
		\hline
		\textbf{Item} &\textbf{\textit{a}} & \textbf{\textit{b}} & \textbf{\textit{c}} & \textbf{\textit{d}} & \textbf{\textit{e}} & \textbf{\textit{f}} & \textbf{\textit{g}} \\
		\hline
		\textbf{eu} & 9 & 5 & 4 & 2 & 1 & 7 & 3\\
		\hline
	\end{tabular}
\end{table}

\begin{definition}[Sequential rule \cite{fournier2014erminer}]
    \rm The format of a sequential rule $r$ can be represented as $\{X\} \to \{Y\}$ and consists of two non-empty itemsets $X$ and $Y$, where $X$ is called antecedent, $Y$ is called consequent, and $X \cap Y$ = $\emptyset$. For the partially-ordered SRM, if a sequence $s$ with $n$ itemsets contains a sequential rule $r$, there must exist positive integers $e$ and $f$ that satisfy $X$ $\subseteq$ $\bigcup_{i=1}^{e} I_i$ ($I_i$ is the $i-$th itemset of $s$) and $Y$ $\subseteq$ $\bigcup_{j=f}^{n} I_j$, where $e$ $\textless$ $f$.
\end{definition}

\begin{definition}[The size of a sequential rule]
    \rm Given a sequential rule $\alpha$, composed of \textit{k} distinct items in the antecedent and \textit{m} distinct items in the consequent, the size of $\alpha$ is denoted as $k * m$. For another sequential rule $\beta$ with size $l * n$, only when $k$ $\leq$ $l$ and $m$ $\textless$ $n$, or $k$ $\textless$ $l$ and $m$ $\leq$ $n$, is the size of $\beta$ larger than $\alpha$.
\end{definition}

For example, the size of the sequential rule $\alpha$ = $\{a\} \to \{f\}$ is 1 $*$ 1. Given another sequential rule $\beta$ = $\{a,e\} \to \{f\}$, we can know that the size of $\beta$ is larger than $\alpha$. 

\begin{definition}[Confidence \cite{han2001prefixspan}]
    \label{def4:supAndConf}
    \rm Let $\textit{seq}(e)$ represent all sequences in the \textit{SDB} containing the element $e$, which means $\textit{seq}(e)$ = $\{s | e \subseteq s \wedge s \in \textit{SDB}\}$ and $e$ can represent a sequential rule, an itemset, or an item. Moreover, $|\textit{seq}(e)|$ is the number of sequences that contain $e$. For a sequential rule \textit{r} = $\{X\} \to \{Y\}$, the confidence of it is defined as $\textit{conf}(r)$ = $|\textit{seq}(r)| / |\textit{seq}(X)|$.
\end{definition}

In Table \ref{tab:database}, a sequential rule $r$ = $\{a\} \to \{e\}$ appears in $s_1$, $s_2$, and $s_3$, respectively. Thus, $\textit{seq}(r)$ = $\{s_1, s_2, s_3\}$, and then $|\textit{seq}(r)|$ = 3. Similarly, $|\textit{seq}(a)|$ = 3, and then the $\textit{conf}(r)$ can be calculated as $|\textit{seq}(r)|$ / $|\textit{seq}(a)|$ = 1.

\begin{definition}[Utility of item / itemset / sequential rule]
    \rm The utility of an item in \textit{SDB} is the sum of its utility in each sequence, which can be defined as $u(i)$ = $\sum_{i \in s_l \wedge s_l \in \textit{SDB}}$$q(i, s_l)$ $\times $ $eu(i)$, where $q(i, s_l)$ represents the $q(i)$ in the $l-$th sequence of \textit{SDB}. As for an itemset \textit{I}, the utility of it is the sum of the utility of items in \textit{I}, which can be denoted as $u(I)$ = $\sum_{s_l \in \textit{seq}(I)}\sum_{i \in I }$ $q(i,s_l)$ $\times$ $eu(i)$. Furthermore, the utility of a sequential rule \textit{r} = $\{X\} \to \{Y\}$ in a specific sequence $s_l$ is the sum of the itemsets $X$ and $Y$, which can be defined as $u(r, s_l)$ = $u(X)$ + $u(Y)$, where $X$ and $Y$ are both in $s_l$. Moreover, in \textit{SDB}, $u(r)$ = $\sum_{s_l \in \textit{seq}(r)} u(r, s_l)$.
\end{definition}

In Table \ref{tab:database}, since the item $a$ appears in $s_1$, $s_2$, and $s_3$, the utility of it can be calculated as $u(a)$ = (1 $\times$ 9) + (2 $\times$ 9) + (3 $\times$ 9) = 54. For the itemset $\{a,b\}$, it only appears in $s_1$ that the utility of $\{a,b\}$ is $u(\{a,b\})$ = 1 $\times$ 9 + 4 $\times$ 5 = 29. As for the sequential rule $r$ = $\{a\} \to \{e\}$ in Table \ref{tab:database} which \textit{seq}(r) = \{$s_1$, $s_2$, $s_3$\}. Then, the $u(r)$ is (1 $\times$ 9 + 2 $\times$ 1) + (2 $\times $ 9 + 5 $\times $ 1) + (3 $\times $ 9 + 2 $\times $ 1) = 63.

A sequential rule $r$ is called a HUSR if and only if \textit{minUtil} $\geq u(r)$ and \textit{minconf} $\geq \textit{conf}(r)$ where \textit{minUtil} and \textit{minconf} are predefined by the user. The \textit{minUtil} threshold is positive, and the \textit{minconf} threshold $ \in (0,1]$. The goal of HUSRM is to discover all HUSRs in \textit{SDB}. Similar to the introduction of expansion in CMDeo \cite{fournier2012cmrules}, the expansion of HUSR can also be divided into two ways: the left expansion and the right expansion. Different expansion methods have distinct properties that can be used. The expansion of the sequential rule posited by previous studies \cite{zida2015efficient, huang2023us} stipulates that a sequential rule prohibits right expansion after left expansion. The SRIU algorithm is also based on the idea of rule-growth to generate a sequential rule $r$ = $\{X\} \to \{Y\}$, but the expansion method of it is different from \cite{zida2015efficient, huang2023us}. To facilitate the discussion, we have a simple definition of the expansion of sequence rules.

\begin{definition}[Expansion of sequential rule]
    \rm For a sequential rule $r$ = \{X\} $\to$ \{Y\}, both the left expansion and the right of $r$, the new item can be added into \{X\} or \{Y\} only if it satisfies the $\succ_{lex}$ order for each item in X or Y. To avoid generating the same sequential rule again, \cite{zida2015efficient} and \cite{huang2023us} stipulate that $r$ cannot execute the right expansion after the left expansion, but the regulation in \cite{zhang2024totally} is that $r$ cannot perform the left expansion after the right expansion. To classify the two expansion methods in \cite{zida2015efficient,huang2023us} and \cite{zhang2024totally}, we call them \textit{right-left} expansion and \textit{left-right} expansion, respectively.
\end{definition}

\textbf{Problem statement}. Intuitively, the sequential rules generated based on the rule-growth approach exist in a certain relationship because the difference between the two HUSRs generated by the rule-growth approach is only an item. The first problem is: could we find out if all HUSRs satisfy an increasing utility ratio? Furthermore, the expansion order will affect the mining results. The second problem is how to select the expansion method to discover more valuable HUSRs.

For example, for the first problem, we assume the \textit{minUtil} is 10. There are two HUSRs, $r$ and $r^\prime$, where $r^\prime$ is an extension of $r$, and their utility is 50 and 40, respectively. However, the utility of $r^\prime$ goes down when compared with $r$, which means the expansion is negative. This lack of insight into the utility evolution between related rules limits their decision-making value. Consequently, practitioners cannot distinguish between truly progressive patterns and those that merely appear frequent or high-utility in isolation. As for the second problem, there is a target sequential rule $t$ = $\{a, c\}$ $\to$ $\{b,d\}$. It can be generated by $\beta$ = $\{a, c\}$ $\to$ $\{b\}$ adding the item $d$ or $\gamma$ = $\{a \}$ $\to$ $\{b,d\}$ adding the item $c$. The increasing utility ratios between the $\beta$ and $t$ or $\gamma$ and $t$ are dependent on the new item and the original utility of the sequential rule. From $\beta$ to $t$ is used for \textit{left-right} expansion, and $\gamma$ to $t$ is utilized for \textit{right-left} expansion. In these two cases, there exists one possible consequence: $t$ satisfies the increasing utility ratio for $\beta$ but not for $\gamma$. Therefore, the mining results may be different for the two expansion methods. The practical necessity for discovering rules with an increasing utility ratio is evident in real-world applications. In e-commerce, traditional HUSRM might yield a rule like $\{ \text{gamepad} \} \rightarrow \{ \text{computer game}, \text{cheap cable} \}$, where adding a low-margin accessory dilutes overall profit. In contrast, SRIU can discover value-increasing chains, such as $\{ \text{gamepad} \} \rightarrow \{ \text{computer game} \} \Rightarrow \{ \text{gamepad} \} \rightarrow \{ \text{computer game}, \text{premium subscription} \}$, guiding upselling toward higher customer lifetime value. Similarly, in healthcare analytics, while a rule like this $\{ \text{symptom A} \} \rightarrow \{ \text{diagnosis B}, \text{unnecessary test C} \}$ may be high-utility due to frequent billing, it adds cost without clinical benefit. SRIU ensures each extension contributes positively to a composite utility (e.g., diagnostic certainty minus cost), thereby supporting cost-effective care pathways. These examples underscore that the increasing utility ratio constraint is not merely technical—it enforces actionable, value-accretive evolution critical for decision-making.
		
\section{The SRIU Algorithm} \label{sec:algorithm}
		
In this section, we first introduce some key definitions to make the downward closure property valid in the mining procedure. Then, based on these new upper bounds, some corresponding pruning strategies and data structures are proposed. Finally, the SRIU algorithm is described in detail. 
		
\subsection{Definitions and upper bounds}

\begin{definition}[Utility of sequence]
    \rm The utility of a sequence is the cumulative utility of all items in the sequence, defined as $\textit{SU}(s_l)$ = $\sum_{i | i \in s_l}u(i)$ = $\sum_{i | i \in s_l} q(i,s_l)$ $\times$ $eu(i)$.
\end{definition}

For example, the utility of sequences in Table \ref{tab:database} (from $s_1$ to $s_5$) is 113, 83, 59, 110, and 13, respectively.

\begin{definition}[Sequence estimated utility, \textit{SEU} \cite{zida2015efficient}]
    \label{SEU}
    \rm In a \textit{SDB}, the sequence estimated utility (\textit{SEU}) of an item $i$ is the sum of the utility of the sequences that contain it; the \textit{SEU} of $i$ can be denoted as $\textit{SEU}(i)$ and defined as $\textit{SEU}(i)$ = $\sum_{s_l \in \textit{seq}(i)} SU(s_l)$. As for a sequential rule $r$, the \textit{SEU}$(r)$ is the \textit{SU} of the sequences in \textit{seq}($r$), which means $\textit{SEU}(r)$ = $\sum_{s_l \in \textit{seq}(r)}SU(s_l)$. 
\end{definition}

\begin{strategy}[\textit{SEU} pruning strategy, \textit{SEUP}]
    \label{strategy:SEUP}
    \rm \textit{SEUP} \cite{zida2015efficient} was proposed to prune the unpromising items and rules. For more specific, given an item $i$, if $\textit{SEU}(i)$ $\textless$ \textit{minUtil}, that $i$ can be removed from \textit{SDB}, and for a sequential rule $r$, if $\textit{SEU}(r)$ $\textless $ \textit{minUtil}, then $r$ cannot be extended further.
\end{strategy}

For example, in Table \ref{tab:database} and setting \textit{minUtil} = 200, since item $g$ appears in $s_3$ and $s_5$, then the $\textit{SEU}(g)$ = 59 + 13 = 72. In this case, $g$ is an unpromising item that can be removed. As for the sequential rule $r$ = $\{b\} \to \{f\}$, it occurs in $s_1$ and $s_2$, thus $\textit{SEU}(r)$ = 113 + 83 = 196, which means the expansion of $r$ can be terminated. 

\begin{strategy}[Confidence monotonic pruning strategy, \textit{CONFP}]
    \label{strategy:confPrune}
    \rm From Definition \ref{def4:supAndConf} we can know, with the right expansion of a sequential rule $r$ to obtain $r^\prime$, that $|\textit{seq}(r^\prime = X \to Y \cap i)|$ $\leq$ $|\textit{seq}(r = X \to Y)|$, but $|\textit{seq}(X)|$ is not changed. Therefore, the sequential rule of confidence is monotonic during the right expansion procedure. Then, if $\textit{conf}(r)$ $\leq$ \textit{minconf}, that $r$ should not be extended.
\end{strategy}

Note that \textit{right-left} expansion allows performing left expansion after executing right expansion. According to the \textit{CONFP} pruning the sequential rule ahead, the algorithm will miss some results since confidence is not monotonically decreasing in the left extension. Conversely, \textit{left-right} expansion exclusively permits right expansion after right expansion. Therefore, \textit{CONFP} can only be used for the \textit{left-right} expansion.

\begin{definition}[Item pair estimated utility pruning map]
    \rm Based on the structure REUCM \cite{huang2023us}, we proposed a new structure called item pair estimated utility pruning map (IPEUM). IPEUM does not store item pairs that are items in the same itemset when compared with REUCM. In IPEUM, the item pair $(a,b)$ is different from $(b,a)$ since they mean two sequences, respectively.
\end{definition}

For example, the item pair $(a,c)$ in Table \ref{tab:database} appears in $s_1$, $s_2$, and $s_3$. According to REUCM's definition, the $\textit{SEU}$ of $(a,c)$ in REUCM is calculated as 113 + 83 + 59 = 255. However, in IPEUM, the $\textit{SEU}$ of $(a,c)$ is calculated as 113 + 59 = 172, since the item $a$ and the item $c$ are in the same itemset in $s_2$.

\begin{theorem}
    \label{theorem:IPEUP}
    \rm Given any item pair $(a,b)$, a sequential rule $r$ = $\{a\} \to \{b\}$ and $r^\prime$ the expansion of $r$, there is $\textit{SEU}(r^\prime)$ $\leq $ $\textit{SEU}(r)$ = $\textit{IPEUM}(a,b)$.
\end{theorem}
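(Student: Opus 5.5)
The proof splits into the equality $\textit{SEU}(r) = \textit{IPEUM}(a,b)$ and the inequality $\textit{SEU}(r') \leq \textit{SEU}(r)$.

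\textbf{The equality.} By Definition~\ref{SEU}, $\textit{SEU}(r)=\sum_{s_l\in\textit{seq}(r)}SU(s_l)$, so it suffices to show that $\textit{seq}(r)$ is exactly the set of sequences over which IPEUM accumulates $(a,b)$. By the definition of a sequential rule, $s_l$ contains $r=\{a\}\to\{b\}$ iff there exist $e<f$ with $a\in\bigcup_{i\le e}I_i$ and $b\in\bigcup_{j\ge f}I_j$. Since each item occurs at most once in a sequence, this holds iff $a$ lies in a strictly earlier itemset than $b$. That is precisely the occurrence condition for the ordered pair $(a,b)$ in IPEUM, which excludes pairs in the same itemset (as in the example, where $s_2$ is dropped for $(a,c)$) and distinguishes $(a,b)$ from $(b,a)$. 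Summing $SU(s_l)$ over the same set of sequences gives $\textit{SEU}(r)=\textit{IPEUM}(a,b)$.

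\textbf{The inequality.} We first show $\textit{seq}(r')\subseteq\textit{seq}(r)$ for a single left or right expansion. Let $r'=\{X'\}\to\{Y'\}$ with $X\subseteq X'$ and $Y\subseteq Y'$. If $s_l$ contains $r'$, take the witnessing indices $e<f$ with $X'\subseteq\bigcup_{i\le e}I_i$ and $Y'\subseteq\bigcup_{j\ge f}I_j$. The same $e,f$ witness $X\subseteq\bigcup_{i\le e}I_i$ and $Y\subseteq\bigcup_{j\ge f}I_j$, so $s_l$ contains $r$. Containment is transitive along chains of expansions, so the inclusion extends to any rule obtained from $r$ by repeated left or right expansions, in either the \textit{left-right} or \textit{right-left} order. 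Finally, every $SU(s_l)\geq 0$ because internal and external utilities are positive. Hence summing over the smaller set cannot increase the sum, which gives $\textit{SEU}(r')\le\textit{SEU}(r)$.

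\textbf{Main obstacle.} The delicate step is the equality: the IPEUM definition is stated only informally, as REUCM minus same-itemset pairs. I would make explicit that IPEUM$(a,b)$ sums $SU(s_l)$ over the sequences in which $a$ occurs in an itemset strictly before the one containing $b$. I would then lean on the uniqueness of items per sequence so that ``strictly before'' is well defined and coincides with rule containment.
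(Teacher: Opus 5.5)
Your proof is correct. It reaches the statement by a somewhat different route from the paper.

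The paper does not compare $r^\prime$ with the seed rule $\{a\}\to\{b\}$. Instead it takes a general rule $\{X\}\to\{Y\}$ with largest items $m$ and $n$, adds an item $i$, and asserts $\textit{SEU}(\{X\cup i\}\to\{Y\})\le\textit{SEU}(\{i\}\to\{n\})$ and $\textit{SEU}(\{X\}\to\{Y\cup i\})\le\textit{SEU}(\{m\}\to\{i\})$. That is the form IPEUP actually uses: the newly added item paired with the largest item on the opposite side. Its justification is the remark $|\textit{seq}(X\cup i)|\le|\textit{seq}(i)|$. That remark is about itemset supports rather than rule-containing sequences, and a cardinality bound does not by itself compare sums of $SU$ over different sets. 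The paper also leaves the equality $\textit{SEU}(r)=\textit{IPEUM}(a,b)$ implicit.

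Your argument isolates the actual mechanism. Whenever $X\subseteq X^\prime$ and $Y\subseteq Y^\prime$, the witnesses $e<f$ for $X^\prime\to Y^\prime$ also witness $X\to Y$, so $\textit{seq}(X^\prime\to Y^\prime)\subseteq\textit{seq}(X\to Y)$, and nonnegativity of $SU$ finishes the inequality. You also verify the equality by showing that rule containment coincides with ``$a$ lies in a strictly earlier itemset than $b$''. This is exactly what justifies IPEUM dropping same-itemset pairs. Your route buys rigor and generality: it covers any super-rule, any chain of expansions, and either expansion order.

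The paper's version buys direct alignment with the pruning strategy, and you get it at no cost. Instantiate your inclusion lemma with $X=\{i\}$, $Y=\{n\}$ (respectively $X=\{m\}$, $Y=\{i\}$), using $i\in X\cup\{i\}$ and $n\in Y$. It is worth stating this corollary explicitly, since that is how the theorem is used in IPEUP.
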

\begin{proof}
    Given a sequential rule, $r$ = $\{X\} \to \{Y\}$, an item $i$ is used to be extended. Assume that the maximum lexicographical order items in $X$ and $Y$ are $m$ and $n$, respectively. During the rule-growth procedure, $i$ is exclusively added to the antecedent or the consequent of $r$ in an expansion. Moreover, with the expansion, $|\textit{seq}(X \cup i)|$ $\leq$ $|\textit{seq}(i)|$ or $|\textit{seq}(Y \cup i)|$ $\leq$ $|\textit{seq}(i)|$. Therefore, according to Definition \ref{SEU}, the inequalities $\textit{SEU}(\{X \cup i\}$ $\to \{Y\}) \leq \textit{SEU}(\{i\} \to \{n\})$ and $\textit{SEU}(\{X\} $ $\to $ $\{Y \cup i\})$ $\leq$ $\textit{SEU}(\{m\} \to \{i\})$ be held.
\end{proof}

\begin{strategy}[Item pair estimated utility pruning strategy, \textit{IPEUP}]
    \rm Based on the novel data structure IPEUM and Theorem \ref{theorem:IPEUP}, we propose the item pair estimated utility pruning strategy (\textit{IPEUP}) to reduce the search space. Given a sequential rule $r$ = $\{X\} \to \{Y\}$, and $m$ and $n$ are the largest items according to $\succ_{lex}$ in $X$ and $Y$, respectively. For an item $i$, if \textit{IPEUM}$(m,i)$ $\textless$ \textit{minUtil}, $r$ will not be used to execute right expansion on $i$, and if \textit{IPEUM}$(i,n)$ $\textless $ \textit{minUtil}, $r$ will not be used to execute left expansion on $i$.
\end{strategy}

\begin{definition}[Utility of expansion]
    \label{def:UofExpansion}
    \rm The total utility of all available items in the sequence $s$ only for a sequential rule $r$ to execute the right expansion is denoted as \textit{UR}$(r,s)$. Similarly, the left expansion also has \textit{UL}$(r,s)$. The total utility of all items in the sequence $s$ that is available for $r$ to perform left and right expansions is denoted as \textit{ULR}$(r,s)$.
\end{definition}

To determine the optimal expansion strategy to employ for attaining more valuable outcomes, we propose the $e$-\textit{index} and the definition is below. 

\begin{definition}[e-index]
    \rm The average utility of items that can be used for left expansion for a sequential rule $r$ in a sequence $s$ is defined as $\textit{AvgULeft}(r,s)$ = $(\textit{UL}(r,s) + \textit{ULR}(r,s))$ / $\textit{distinct}(i_L)$, where $\textit{distinct}(i_L)$ means the number of distinct items for $r$ to execute left expansion. Similarly, the right expansion also has $\textit{AvgURight}(r,s)$ = $(\textit{UR}(r,s) + \textit{ULR}(r,s))$ / $\textit{distinct}(i_R)$. We use the index $e$-\textit{index} to measure whether a $1*1$ sequential rule should be extended by \textit{left-right} expansion or \textit{right-left} expansion. 
    $$ e(r,s)=\left\{
	\begin{aligned}
	1 & & \textit{AvgULeft}(r,s) \geq \textit{AvgURight}(r,s)
        \\
	  - 1 & & \textit{AvgULeft}(r,s) \textless \textit{AvgURight}(r,s)
	\end{aligned}
	\right.
	$$
    $e(r, \textit{SDB}) = \sum_{s | s\in \textit{seq}(r)}{e(r,s)}$. Then, if $e(r, \textit{SDB}) \geq 0$ that $r$ adopts \textit{right-left} expansion; otherwise, $r$ adopts \textit{left-right} expansion.
\end{definition}

Consider the sequential rule $r$ = $\{a\} \to \{d\}$ in $s_1$ in Table \ref{tab:database}. In $s_1$, the items $b$ and $c$ are only used to perform left expansion, and then \textit{UL}$(r, s_1)$ = 4 $\times$ 5 + 4 $\times$ 4 = 36. The items $e$ and $f$ are available for both the left and right expansions, thus \textit{ULR}$(r,s_1)$ = 2 $\times$ 1 + 9 $\times$ 7 = 65. Since no item in $s_1$ only applies to the right expansion of $r$, \textit{UR} = 0. For left expansion, \textit{distinct}$(i_L)$ = 2 + 2 = 4, and as for right expansion, \textit{distinct}$(i_R)$ = 2. Thus, \textit{AvgULeft}$(r,s)$ = (36 + 65) / 4 = 25.25 and \textit{AvgURight}$(r,s)$ = 65 / 2 = 32.5. Then, $e(r, \textit{SDB})$ = $e(r, s_1)$ = -1, and then SRIU adopts \textit{left-right} expansion for $r$.

In order to reduce the search space of SRIU, there are two sets of upper bounds and strategies for the \textit{left-right} expansion and the \textit{right-left} expansion, respectively.

\begin{upper bound}[Estimated utility of right expansion]
    \label{EURE}
    \rm  The estimated utility of right expansion (EURE) of a sequential rule $r$ in the sequence $s$, denoted as \textit{EURE}$(r, s)$, is defined as:
    $$ \textit{EURE}(r,s)=\left\{
	\begin{aligned}
	u(r,s) + \textit{AllUR}(r,s) &  & \textit{AllUR}(r,s) \textgreater 0 \\
	0 & & otherwise
	\end{aligned}
	\right.
	$$
\end{upper bound}

Here, \textit{AllUR}$(r,s)$ represents all the utilities of items that can be used for right expansion for $r$, that is, \textit{AllUR}$(r,s)$ = \textit{UR}$(r,s)$ + \textit{ULR}$(r,s)$. The estimated utility of the right expansion of a sequential rule $r$ in a \textit{SDB} can be defined as $\textit{EURE}(r,\textit{SDB})$ = $\sum_{s_l \in \textit{seq}(r)}{\textit{EURE}(r,s_l)}$.

For example, there exists a sequential rule $r$ = $\{b\} \to \{c\}$ in Table \ref{tab:database} which \textit{seq}$(r)$ = \{$s_1$, $s_3$, $s_4$\}. In $s_1$, since there is no item between items $b$ and $c$, \textit{ULR}$(r, s_1)$ = 0. However, both items $e$, $f$, and $d$ can be used to perform the right expansion of $r$; thus, \textit{UR}$(r, s_1)$ = $u(e, s_1)$ + $u(f, s_1)$ + $u(d, s_1)$ = 2 + 63 + 10 = 75. Then \textit{EURE}$(r, s_1)$ = $u(r, s_1)$ + \textit{AllUR}$(r,s_1)$ = 36 + 0 + 75 = 111. In the same way, \textit{EURE}$(r, s_3)$ and \textit{EURE}$(r, s_4)$ are equal to 22 and 90, respectively. Therefore, \textit{EURE}$(r)$ = 111 + 22 + 90 = 223.

\begin{theorem}
    \label{theorem:EURE}
    \rm Given a sequential rule $r$ = $\{X\} \to \{Y\}$ and its right expansion $r^\prime$ = $\{X\} \to \{Y \cup a\}$, there is always satisfy $u(r^\prime) $ $\leq \textit{EURE}(r^\prime)$ $\leq$ $\textit{EURE}(r)$.
\end{theorem}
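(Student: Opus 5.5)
The plan is to prove both inequalities sequence by sequence, then sum over the relevant sequences. The key facts are that $\textit{seq}(r^\prime) \subseteq \textit{seq}(r)$ and that every $\textit{EURE}(r,s)$ is nonnegative.

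\textbf{Containment of sequences.} Any sequence containing $\{X\} \to \{Y \cup a\}$ also contains $\{X\} \to \{Y\}$, using the same split position $e<f$. Hence $\textit{seq}(r^\prime) \subseteq \textit{seq}(r)$.

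\textbf{First inequality, $u(r^\prime) \leq \textit{EURE}(r^\prime)$.} Fix $s \in \textit{seq}(r^\prime)$. If $\textit{AllUR}(r^\prime,s) > 0$, then $\textit{EURE}(r^\prime,s) = u(r^\prime,s) + \textit{AllUR}(r^\prime,s) \geq u(r^\prime,s)$, since utilities are nonnegative. The delicate case is $\textit{AllUR}(r^\prime,s) = 0$, where $\textit{EURE}(r^\prime,s)=0$. Here I would argue as in the rule-growth pruning of \cite{zida2015efficient,huang2023us}: the bound is only meant to cover the utility of rules that can still be grown from $r^\prime$. When $\textit{AllUR}(r^\prime,s)=0$, no further right expansion of $r^\prime$ is possible in $s$, so $s$ contributes nothing to descendants of $r^\prime$. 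The inequality $u(r^\prime)\le \textit{EURE}(r^\prime)$ should therefore be read, and proved, for the utility of right extensions of $r^\prime$. I expect this interpretive step to be the main obstacle. I would state it explicitly: either restrict to sequences with $\textit{AllUR}>0$, or use the fact that $r^\prime$ itself is evaluated directly via $u(r^\prime)$, so that pruning on $\textit{EURE}$ only concerns its extensions.

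\textbf{Second inequality, $\textit{EURE}(r^\prime) \leq \textit{EURE}(r)$.} Fix $s \in \textit{seq}(r^\prime)$. In $s$, item $a$ is an item available for right expansion of $r$, so its utility $u(a,s)$ is counted in $\textit{AllUR}(r,s)$, and in particular $\textit{AllUR}(r,s)>0$. After adding $a$ to the consequent, the items available for right expansion of $r^\prime$ must come after $a$ in $\succ_{lex}$ order and must still lie after the antecedent's position. They therefore form a subset of those available to $r$, with $a$ itself removed. This gives $\textit{AllUR}(r^\prime,s) \leq \textit{AllUR}(r,s) - u(a,s)$. Combining with $u(r^\prime,s) = u(r,s) + u(a,s)$, in both cases of the definition we get
$$\textit{EURE}(r^\prime,s) \leq u(r,s) + u(a,s) + \textit{AllUR}(r,s) - u(a,s) = \textit{EURE}(r,s).$$
Summing over $\textit{seq}(r^\prime)$, and adding the nonnegative terms for $s \in \textit{seq}(r)\setminus\textit{seq}(r^\prime)$, yields $\textit{EURE}(r^\prime) \leq \textit{EURE}(r)$.

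The subset claim in the second step needs care: the rightmost antecedent position is unchanged by a right expansion, and the first consequent position can only shift. I would check that the earliest-occurrence positions used to define $\textit{UR}$ and $\textit{ULR}$ behave monotonically under this shift.
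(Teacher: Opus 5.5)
Your argument for $\textit{EURE}(r^\prime)\le\textit{EURE}(r)$ is the paper's argument: per sequence, $u(a,s)$ is part of $\textit{AllUR}(r,s)$, the right-expansion candidates can only shrink, and one sums over $\textit{seq}(r^\prime)\subseteq\textit{seq}(r)$. You do it more carefully than the paper. The paper only asserts that the candidate set is \emph{no larger}, and a cardinality claim alone would not bound utilities. Your subset claim is what is actually needed: anything lexicographically after $a$ is after every item of $Y$, and the requirement of lying after the itemset that completes $X$ is untouched by a right expansion. You also handle the zero branch and the nonnegative terms from $\textit{seq}(r)\setminus\textit{seq}(r^\prime)$, which the paper skips. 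Your closing worry is harmless. Items occur once per sequence, so positions are unambiguous. Moving the first consequent position only shifts items between $\textit{UR}$ and $\textit{ULR}$, never into or out of $\textit{AllUR}=\textit{UR}+\textit{ULR}$.

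The genuine issue is the first inequality, and your suspicion is warranted more strongly than you state it. The inequality $u(r^\prime)\le\textit{EURE}(r^\prime)$ is false, so no reinterpretation makes it provable; the paper's \emph{clearly} just ignores the zero branch you found. In Table~\ref{tab:database}, take $r=\{a\}\to\{e\}$ and $r^\prime=\{a\}\to\{e,f\}$. Then $\textit{seq}(r^\prime)=\{s_1,s_2\}$. Since $g$, the only item lexicographically after $f$, occurs in neither sequence, $\textit{AllUR}(r^\prime,s_1)=\textit{AllUR}(r^\prime,s_2)=0$. Hence $\textit{EURE}(r^\prime)=0<125=74+51=u(r^\prime)$.

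Drop the option of restricting to sequences with $\textit{AllUR}>0$, since it alters $u(r^\prime)$ instead of bounding it. Prove instead the shifted claim that EUREP really uses, $u(r^\prime)\le\textit{EURE}(r)$, which your own computation already yields. The item $a$ is a right-expansion candidate of $r$ in every $s\in\textit{seq}(r^\prime)$, and utilities are positive, so $\textit{AllUR}(r,s)\ge u(a,s)>0$ and the nonzero branch applies. Therefore $\textit{EURE}(r,s)=u(r,s)+\textit{AllUR}(r,s)\ge u(r,s)+u(a,s)=u(r^\prime,s)$, and summing over $\textit{seq}(r^\prime)$ finishes it. In the example $\textit{EURE}(r)=74+51+0=125$, so the bound is tight. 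Combine this with your second inequality and induct along a chain of right expansions: every rule reachable from $r$ has utility at most $\textit{EURE}(r)$, while $r$ itself is tested directly against \textit{minUtil}. That, not $u(r^\prime)\le\textit{EURE}(r^\prime)$, is what makes EUREP safe.
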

\begin{proof}
    \label{proof:EURE}
    \rm According to upper bound \ref{EURE}, it is clearly that $u(r^\prime)$ $\leq$ $\textit{EURE}(r^\prime)$. Since the item $a$ can be used for right expansion for $r$, $u(a,s)$ is part of $u(\textit{UR})(r, s)$ + $u(\textit{ULR})(r,s)$ = \textit{AllUR}$(r,s)$. Furthermore, the extension of $r$ is only feasible when the item fulfills the condition of $\succ_{lex}$, which means that when adding a new item to $r$, the size of the set of items that can be the right expansion for $r^\prime$ is no larger than $r$.
    \begin{tabbing}
        $\because$ $u(a,s)$ + \textit{AllUR}$(r^\prime, s)$ $\leq$ \textit{AllUR}$(r,s)$,
    \end{tabbing}
    \begin{tabbing}
	     $\therefore$ \textit{EURE}$(r^\prime, s)$ \= 
	    = $u(r^\prime, s)$ + \textit{AllUR}$(r^\prime, s)$ \\
	    \> = $u(r, s) + u(a,s)$ + \textit{AllUR}$(r^\prime, s)$ \\
	    \> $\leq$ $u(r, s)$ + \textit{AllUR}$(r,s)$ \\
	    \> = \textit{EURE}$(r, s)$. 
    \end{tabbing}   
    Moreover, it is easy to get \textit{seq}$(r^\prime)$ $\in$ \textit{seq}$(r)$. Therefore, $u(r^\prime)$ $\leq$ $\sum_{s_l \in \textit{seq}(r^\prime)}{\textit{EURE}(r^\prime, s_l)}$ $\leq$ $\sum_{s_l \in \textit{seq}(r)}{\textit{EURE}(r, s_l)}$.
\end{proof}

\begin{strategy}[Estimated utility of right expansion pruning strategy]
    \label{strategy:EURE}
    \rm According to Theorem \ref{theorem:EURE}, during the right expansion procedure of the \textit{left-right} expansion of a sequential rule $r$, the SRIU algorithm proposes the estimated utility of the right expansion pruning strategy (EUREP), which can safely terminate the expansion of $r$ when \textit{EURE}$(r)$ $\textless$ \textit{minUtil}.
\end{strategy}

\begin{upper bound}[Estimated utility of left expansion]
    \label{EULE}
    \rm  The estimated utility of left expansion (EULE) of a sequential rule $r$ in the sequence $s$, denoted as \textit{EULE}$(r, s)$, is defined as:
    $$ \textit{EULE}(r,s) = \left\{
	\begin{aligned}
	u(r,s) + \textit{UExtend}(r,s) & & \textit{UExtend}(r,s) \textgreater 0 \\
	0 & & otherwise
	\end{aligned}
	\right.
	$$
\end{upper bound}

Here, \textit{UExtend}$(r,s)$ represents the aggregate utility of all available items for $r$ to execute left or right expansions, which means \textit{UExtend}$(r,s)$ = \textit{UL}$(r,s)$ + \textit{ULR}$(r,s)$ + \textit{UR}$(r,s)$. Different from upper bound \ref{EURE}, since the algorithm can execute right expansion after left expansion according to the \textit{left-right} expansion, that algorithm must take the \textit{UR} of $r$ into account during the left expansion. The estimated utility of the left expansion of a sequential rule $r$ in an \textit{SDB} is defined as: \textit{EULE}$(r)$ = $\sum_{s_l \in \textit{seq}(r)}{\textit{EULE}(r,s_l)}$.

We continue to consider the sequential rule $r$ = $\{b\} \to \{c\}$ in Table \ref{tab:database} which \textit{seq}$(r)$ = \{$s_1$, $s_3$, $s_4$\} as an example. In $s_4$, items $e$ and $g$ are between $b$ and $c$, both of which are available for $r$ to perform left and right expansions. Therefore, \textit{ULR}$(r, s_4)$ = $u(e, s_4)$ + $u(e, s_4)$ = 1 + 45 = 46. In $s_4$, $r$ can perform left expansion by adding the items $d$ and $f$, thus \textit{UL}$(r, s_4)$ = $u(d, s_4)$ + $u(f, s_4)$ = 6 + 14 = 20. As there is no item for $r$ to right expansion, \textit{EULE}$(r, s_4)$ = $u(r, s_4)$ + \textit{UL}$(r, s_4)$ + \textit{ULR}$(r, s_4)$ + \textit{UR}$(r, s_4)$ = 44 + 20 + 46 + 0 = 110. In the same way, \textit{EULE}$(r, s_1)$ and \textit{EULE}$(r, s_3)$ are equal to 111 and 22, respectively. As a result, \textit{EULE}$(r) $ = 111 + 22 + 110 = 243.

\begin{theorem}
    \label{theorem:EULE}
    \rm Given a sequential rule $r$ = $\{X\} \to \{Y\}$ and its left expanded rule $r^\prime$ = $\{X \cup a\} \to \{Y\}$, there is always $u(r^\prime)$ $\leq$ \textit{EULE}$(r^\prime)$ $\leq$ \textit{EULE}$(r)$.
\end{theorem}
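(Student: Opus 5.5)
The plan mirrors the proof of Theorem \ref{theorem:EURE}: first a per-sequence inequality, then summation over the supporting sequences. The first inequality, $u(r^\prime) \leq \textit{EULE}(r^\prime)$, comes directly from upper bound \ref{EULE}. For every $s \in \textit{seq}(r^\prime)$, either $\textit{UExtend}(r^\prime,s) > 0$, so $\textit{EULE}(r^\prime,s) = u(r^\prime,s) + \textit{UExtend}(r^\prime,s) \geq u(r^\prime,s)$, or we are in the degenerate case. Summing over $\textit{seq}(r^\prime)$ then gives the claim. I will treat the degenerate case $\textit{UExtend}(r^\prime,s)=0$ the way the paper treats the analogous case for \textit{EURE}. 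In that case no item remains to extend with, and the bound is only needed for descendants of $r^\prime$, which do not occur in $s$. So the inequality is read as a bound for pruning further expansion.

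For the second inequality, fix $s \in \textit{seq}(r^\prime)$. Since $r^\prime$ is obtained by adding $a$ to $X$, $a$ must be available for left expansion of $r$ in $s$. Hence $u(a,s)$ is counted in $\textit{UL}(r,s) + \textit{ULR}(r,s)$, and therefore in $\textit{UExtend}(r,s)$. The key claim is
$$u(a,s) + \textit{UExtend}(r^\prime,s) \leq \textit{UExtend}(r,s).$$
To show it, I will argue that every item available for left or right expansion of $r^\prime$ in $s$ was already available for $r$ in $s$, and that $a$ itself is no longer available. The reasons are that the items of $r^\prime$ contain those of $r$, and that the $\succ_{lex}$ constraint only shrinks the set of admissible items. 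Adding $a$ to the antecedent can move the last antecedent position to the right. This may shift some items from the ``left only'' class to the ``left or right'' class, or eliminate them, but it never creates new available items. Because the three classes UL, ULR, UR partition the available items and \textit{UExtend} sums all three, such reclassification does not change the total.

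With that claim, $\textit{EULE}(r^\prime,s) = u(r,s) + u(a,s) + \textit{UExtend}(r^\prime,s) \leq u(r,s) + \textit{UExtend}(r,s) = \textit{EULE}(r,s)$. If instead $\textit{EULE}(r^\prime,s)=0$, the inequality is trivial. Finally, $\textit{seq}(r^\prime) \subseteq \textit{seq}(r)$ and all terms are nonnegative, so summing over $\textit{seq}(r^\prime)$ and enlarging to $\textit{seq}(r)$ gives $\textit{EULE}(r^\prime) \leq \textit{EULE}(r)$.

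The main obstacle is the set-inclusion claim for available items under left expansion. It depends on the paper's informal notion of ``available,'' so I will fix it explicitly. An item is available for left expansion if it lies no later than the itemset bounding $Y$ from the left and is $\succ_{lex}$-larger than the maximal item of $X$. An item is available for right expansion if it lies after the last itemset containing $X$ and is $\succ_{lex}$-larger than the maximal item of $Y$. Both conditions are monotone under enlarging $X$. This is also why \textit{UR} must be included in \textit{UExtend}: right expansions are still permitted after left ones. With these definitions, the inclusion follows item by item.
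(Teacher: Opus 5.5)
Your proposal is correct and follows the paper's proof. It uses the same per-sequence inequality $u(a,s)+\textit{UExtend}(r^\prime,s)\leq\textit{UExtend}(r,s)$, justified because the admissible items for $r^\prime$ are a subset of those for $r$ that no longer contains $a$, and then sums over $\textit{seq}(r^\prime)\subseteq\textit{seq}(r)$. Your caution about sequences with $\textit{UExtend}(r^\prime,s)=0$ is well founded: there $\textit{EULE}(r^\prime,s)=0<u(r^\prime,s)$, so $u(r^\prime)\leq\textit{EULE}(r^\prime)$ can fail as literally stated and holds only in the pruning sense you describe, a point the paper's ``it is clear'' skips. One small slip: moving the last antecedent position to the right can only push items from \textit{ULR} into \textit{UL}, or remove them, never from \textit{UL} into \textit{ULR}, but your argument only needs that the union of admissible items shrinks.
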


\begin{proof}
    \label{proof:EULE}
    \rm According to the upper bound \ref{EULE}, it is clear that $u(r^\prime)$ $\leq$ \textit{EULE}$(r^\prime)$. After adding $a$ into $r$ to generate $r^\prime$, the items previously deemed suitable for left expansion in $r$ may no longer be compatible with $r^\prime$. However, the items that can be used for expansion as before have not changed. We have $u(a,s)$ + \textit{UL}$(r^\prime, s)$ + \textit{ULR}$(r^\prime,s)$ + \textit{UR}$(r^\prime, s)$ $\leq$ \textit{UL}$(r,s)$ + \textit{ULR}$(r,s)$ + \textit{UR}$(r, s)$. 
    \begin{tabbing}
	     $\therefore$ $\textit{EULE}(r^\prime, s)$ \= 
	    = $u(r^\prime, s)$ + \textit{UExtend}$(r^\prime, s)$ \\
	    \> = $u(r, s)$ + $u(a,s)$ + \textit{UExtend}$(r^\prime, s)$ \\
	    \> $\leq$ $ u(r, s)$ + \textit{UExtend}$(r, s)$ \\
	    \> = \textit{EULE}$(r, s)$. 
    \end{tabbing}    
    Moreover, it is easy to know \textit{seq}$(r^\prime)$ $\in$ \textit{seq}$(r)$. Therefore, $u(r^\prime)$ $\leq$ $\sum_{s_l \in \textit{seq}(r^\prime)}{\textit{EULE}(r^\prime, s_l)} \leq \sum_{s_l \in \textit{seq}(r)}{\textit{EULE}(r, s_l)}$.
\end{proof}

\begin{strategy}[Estimated utility of left expansion pruning strategy]
    \label{strategy:EULE}
    \rm According to Theorem \ref{theorem:EULE}, during the left expansion procedure of the \textit{left-right} expansion of a sequential rule $r$, SRIU proposes the estimated utility of the left expansion pruning strategy (EULEP), which can safely terminate the expansion of $r$ when \textit{EULE}$(r)$ $\textless $ \textit{minUtil}.
\end{strategy}

Upper bounds \ref{EURE} and \ref{EULE} and the accompanying strategies \ref{strategy:EURE} and \ref{strategy:EULE} are applicable for \textit{left-right} expansion. In contrast, the upper bounds and pruning strategies employed in \textit{right-left} expansion differ from those in \textit{left-right} expansion, as outlined in the prior work \cite{huang2023us}. In the subsequent discussion, we provide concise definitions of the related concepts.

\begin{upper bound}[Left expansion estimated utility]
    \label{LEEU}
    \rm The left expansion estimated utility (LEEU) of a sequential rule $r$ in a sequence is defined as: 
    $$ \textit{LEEU}(r,s)=\left\{
	\begin{aligned}
	u(r,s) + \textit{AllUL}(r,s)&  & \textit{AllUL}(r,s) \textgreater 0 \\
	0 & & otherwise
	\end{aligned}
	\right.
	$$
    Here, \textit{AllUL} is similar to \textit{AllUR}, which denotes the total utility of items that can be used for left expansion for $r$: \textit{AllUL}$(r,s)$ = \textit{UL}$(r,s)$ + \textit{ULR}$(r,s)$. The left expansion estimated utility of a sequential rule $r$ in an \textit{SDB} can be defined as $\textit{LEEU}(r)$ = $\sum_{s_l \in \textit{seq}(r)}{\textit{LEEU}(r,s_l)}$.
\end{upper bound}

\begin{strategy}[LEEU pruning strategy: LEEUP]
    \label{strategy:LEEU}
    \rm The downward closure of LEEU has been proven in \cite{huang2023us}; then, SRIU can safely terminate the left expansion of the \textit{right-left} expansion of $r$ when $\textit{LEEU}(r) \textless$ \textit{minUtil}.  
\end{strategy}

\begin{upper bound}[Right expansion estimated utility]
    \label{REEU}
    \rm \rm The right expansion estimated utility (REEU) of a sequential rule $r$ in a sequence is defined as:
    $$ \textit{REEU}(r,s)=\left\{
	\begin{aligned}
	u(r,s) + \textit{UExtend}(r,s) & & \textit{UExtend}(r,s) \textgreater 0 \\
	0 & & otherwise
	\end{aligned}
	\right.
	$$
Here, \textit{UExtend} still indicates the total utility of items that can be used for the left or right expansion for $r$. The right expansion estimated utility of a sequential rule $r$ in an \textit{SDB} can be defined as \textit{REEU}$(r)$ = $\sum_{s_l \in \textit{seq}(r)}{\textit{REEU}(r,s_l)}$.
\end{upper bound}

\begin{strategy}[REEU pruning strategy: REEUP]
    \label{strategy:REEU}
    \rm The downward closure of REEU has been proven in \cite{huang2023us}; then, SRIU can safely terminate the right expansion of the \textit{right-left} expansion of $r$ when \textit{REEU}$(r)$ $\textless$ \textit{minUtil}. 
\end{strategy}

Although a lot of methods \cite{gan2019survey,telikani2020survey} were proposed to evaluate association rules like \textit{lift}, it is not a null invariance measure (the sequence does not contain any investigation items), which means the number of sequences in the database will affect the value of the metric. Other methods, such as \textit{all\_conf}, \textit{max\_conf}, \textit{cosine}, and \textit{Kulc}, do not consider the implication relation between a rule's antecedent and consequent. Study \cite{luna2018optimization} strictly analyzes the relationship between several rule measures. Based on this, we evaluate sequence rules with confidence and \textit{conviction}. Although \textit{conviction} is not a null invariance measure, it represents a rule's degree of implication and is consistent with the order relationship emphasized by the sequential rule. The formulation of \textit{conviction} is defined as \textit{conviction}$(X \to Y)$ = $(P(X)$ $\times$ $P(\neg Y))$ / $ P(X \neg Y)$. To further simplify the calculation, \textit{conviction}$(X \to Y)$ can be rewritten as $(1 - P(Y))$ / $1-\textit{conf}(X \to Y))$, where P(X) represents the probability of the occurrence of $X$ in the database. A higher value of \textit{conviction} indicates a stronger association between the sequential rule's antecedent and consequent.

\subsection{Data structures}

We design two data structures, \textit{UTable} and \textit{PUTable}, to achieve the pruning strategies based on the abovementioned upper bounds. Meanwhile, to calculate the $e$-index of each $1 * 1$ sequential rule and avoid carrying redundant information during the mining procedure, we also design a temp data structure \textit{TempTable} to store relevant information of $1 * 1$ sequential rule, and \textit{TempTable} can easily transform to \textit{UTable}. The details are described below.

\begin{definition}[TempTable]
  \rm  TempTable is designed for the $1 * 1$ sequential rule $r$. As shown in Fig. \ref{fig:TempTable}, each TempTable contains the following auxiliary information: 1) the total utility (\textit{tu}) of $r$ in the \textit{SDB}; 2) the sum of estimated utility (\textit{sumEU}) of $r$; 3) the $e$-index of $r$; and 4) a list of \textit{Temp-Elements}. According to the \textit{sumEU}, SRIU can prune the $1 * 1$ sequential rule at the beginning. The value of $e$-index is used to determine whether $r$ adopts the \textit{left-right} expansion or the \textit{right-left} expansion. As for \textit{Temp-Element}, which records the related information of a sequence containing $r$, the related information of $r$ in the sequence $s$ can be divided into six types, and they can be represented as a tuple like (\textit{SID}, \textit{Utility}, \textit{EU}, \textit{Positions}, \textit{Items}, \textit{e}). The sequence identifier of $s$, which contains $r$, is \textit{SID}. \textit{Utility} is the utility of $r$ in $s$. \textit{EU} is composed of three parts and defined as \textit{EU} = (\textit{UL}$(r,s)$, \textit{ULR}$(r,s)$, \textit{UR}$(r,s)$). \textit{Positions} is a 2-turple ($\alpha$, $\beta$) that consists of the position of the last item in the antecedent and consequence of $r$. As for \textit{Items}, which is used to auxiliary calculate the \textit{e}, \textit{Items} is a 2-turple (\textit{IL}, \textit{IR}) that is made up of the number of items in $s$ that can be used to perform left expansion and right expansion.
\end{definition}

Taking the sequential rule $r$ = $\{a\} \to \{d\}$ in Table \ref{tab:database} as an example. Since \textit{seq}$(r)$ = $\{s_1\}$, there is only one \textit{Temp-Element} in the \textit{TempTable} for $r$. In $s_1$, the positions of $a$ and $d$ are 1 and 3, respectively. Therefore, $\alpha$ = 1 and $\beta$ = 3, and it is easy to calculate \textit{Utility}$(r)$ = 19. Moreover, according to Definition \ref{def:UofExpansion}, \textit{UL}, \textit{ULR}, and \textit{UR} are equal to 36, 65, and 0, respectively. Meanwhile, the value of \textit{IL} is 4 since 4 items can be used to perform left expansion for $r$ in $s_1$, and then \textit{IR} = 2. Finally, based on the definition of $e$-index, $e(r,s_1)$ = -1, and the $e$-index of $r$ is equal to -1. As there is only one \textit{Temp-Element} in the \textit{TempTable} for $r$, \textit{tu} = \textit{Utility}$(r,s_1)$ = 19 and \textit{sumEU} = 19 + 36 + 65 + 0 = 120. The result is shown in Fig. \ref{fig:TempTable}.

\begin{figure}[!htbp]
    \centering
    \includegraphics[trim=110 190 100 170,clip,scale=0.33]{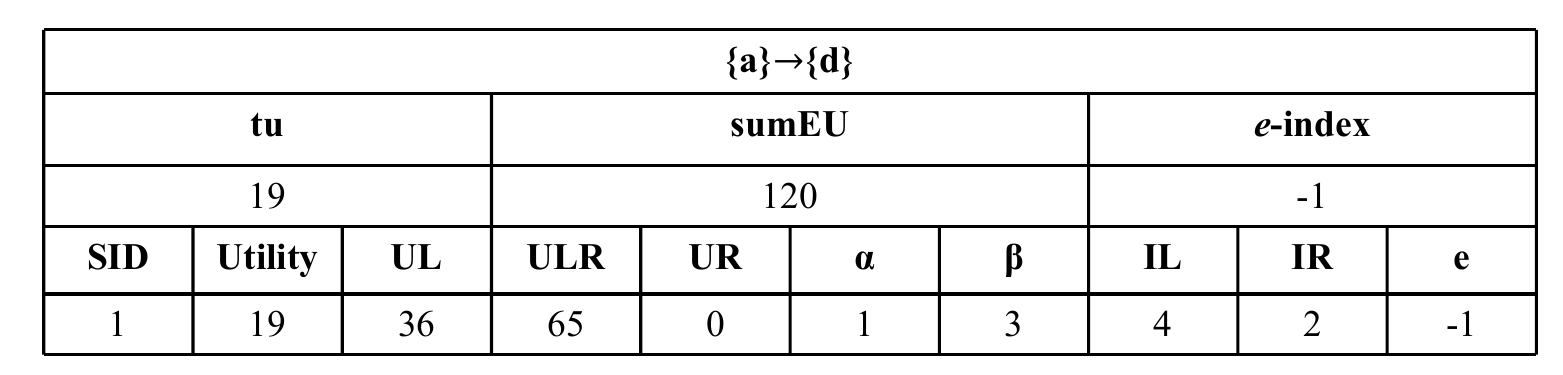}
    \caption{The TempTable w.r.t. $\{a\} \to \{d\}$.}
    \label{fig:TempTable}
\end{figure}

Once it obtains the $e$-index, SRIU can adopt the specific expansion approach. Note that, since the upper bounds of the \textit{left-right} expansion and the \textit{right-left} expansion are different, we can observe that \textit{EULE}$(r)$ and \textit{REEU}$(r)$ are expressively equivalent. Therefore, we can use a new variable to give a uniform representation of \textit{EULE}$(r)$ and \textit{REEU}$(r)$, \textit{UBTotal}. Furthermore, there is a slight difference between \textit{EURE}$(r)$ and \textit{LEEU}$(r)$ that makes it inappropriate to use two different data structures to store \textit{EURE}$(r)$ and \textit{LEEU}$(r)$, respectively. Then, in SRIU, it adopts a boolean variable, \textit{flag}, to determine whether to use \textit{EURE}$(r)$ or \textit{LEEU}$(r)$.

\begin{definition}[UElement and UTable of a HUSR]
    \rm Similar to TempTable, UTable differs in that it does not require the storage of the $e$-index of $r$. The estimated utility of $r$ is divided into two parts, \textit{UBTotal} and \textit{UBPart}, which are the sum of \textit{UBTotal} and \textit{UBPart} in elements (\textit{UElement}), respectively. \textit{UElement} is a 6-tuple consisting of (\textit{SID}, \textit{Utility}, \textit{EU}, \textit{Positions}, \textit{UBs}, \textit{flag}), where \textit{UBs} and \textit{flag} are novel elements in comparison to \textit{Temp-Element}. \textit{UBs} is a 2-tuple (\textit{UBTotal}, \textit{UBPart}), in which \textit{UBTotal} represents $\textit{EULE}(r)$ or $\textit{REEU}(r)$. As for \textit{UBPart}, when \textit{flag} = 1, then \textit{UBPart} = \textit{EURE}$(r)$, and if \textit{flag} = 0, then \textit{UBPart} = \textit{LEEU}$(r)$.
\end{definition}

\begin{figure}[!htbp]
    \centering
    \includegraphics[trim=70 160 100 170,clip,scale=0.32]{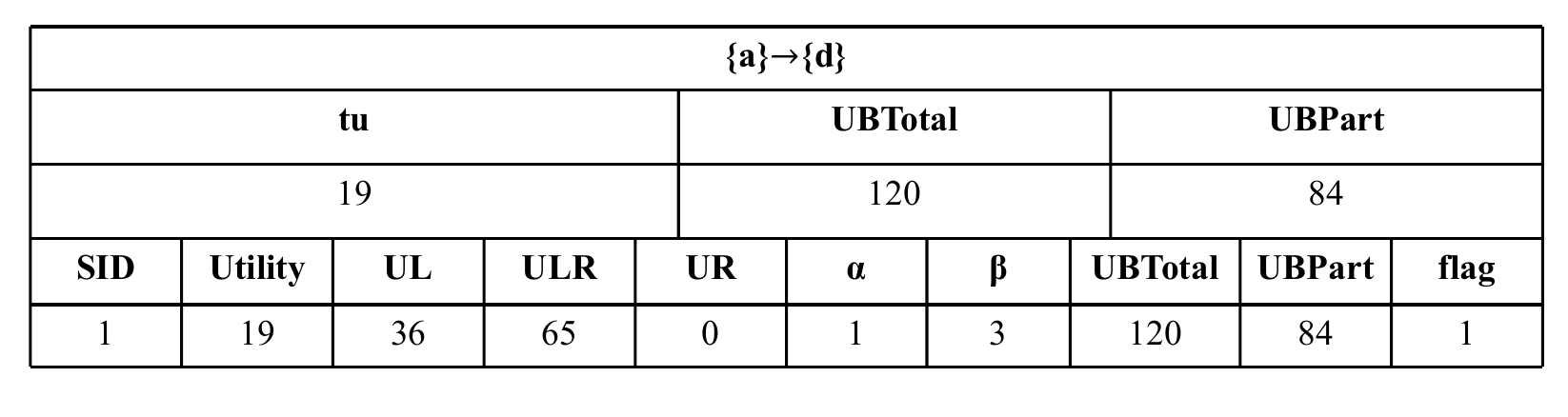}
    \caption{The TempTable w.r.t. $\{a\} \to \{d\}$.}
    \label{fig:UTable}
\end{figure}

For example, since we calculate the $e$-index of $r$ = $\{a\} \to \{d\}$ is equal to -1 before that, SRIU adopts the \textit{left-right} expansion to extend $r$. Therefore, \textit{flag} = 1 and \textit{UBPart}$(r,s_1)$ = \textit{EURE}$(r)$ = \textit{Utility}$(r,s_1)$ + \textit{ULR}$(r,s_1)$ + \textit{UR}$(r,s_1)$ = 84. Then, the \textit{UTable} of $r$ is shown in Fig. \ref{fig:UTable}.

\begin{definition}[PUTable]
    \rm Since we limit the expansion, PUTable is a more compact structure than UTable. For example, the right expansion cannot be performed after a left expansion during the \textit{right-left} expansion, and it is the same reason during the \textit{left-right} expansion. Therefore, PUTable no longer stores \textit{Position} and \textit{UBTotal}. Moreover, if \textit{flag} = 1, we save \textit{UR}, and if \textit{flag} = 0, we save \textit{UL} because it is enough to calculate the \textit{UBPart} from the remaining information.
\end{definition}

For example, the sequential rule $r$ = $\{a\} \to \{d\}$ generates the $r^\prime$ = $\{a\} \to \{d,e\}$ by the right expansion. Since $r$ is used in the \textit{left-right} expansion, the subsequent expansion of $r^\prime$ does not need to consider the left expansion. Therefore, SRIU only stores the \textit{tu} and \textit{UBPart} in \textit{PUTable} of $r^\prime$. Additionally, as \textit{flag} = 1, that \textit{PUTable} only keeps the \textit{UR}. The \textit{PUTable} of $r^\prime$ is shown in Fig. \ref{fig:PUTable}.

\begin{figure}[!htbp]
    \centering
    \includegraphics[trim=110 250 100 120,clip,scale=0.33]{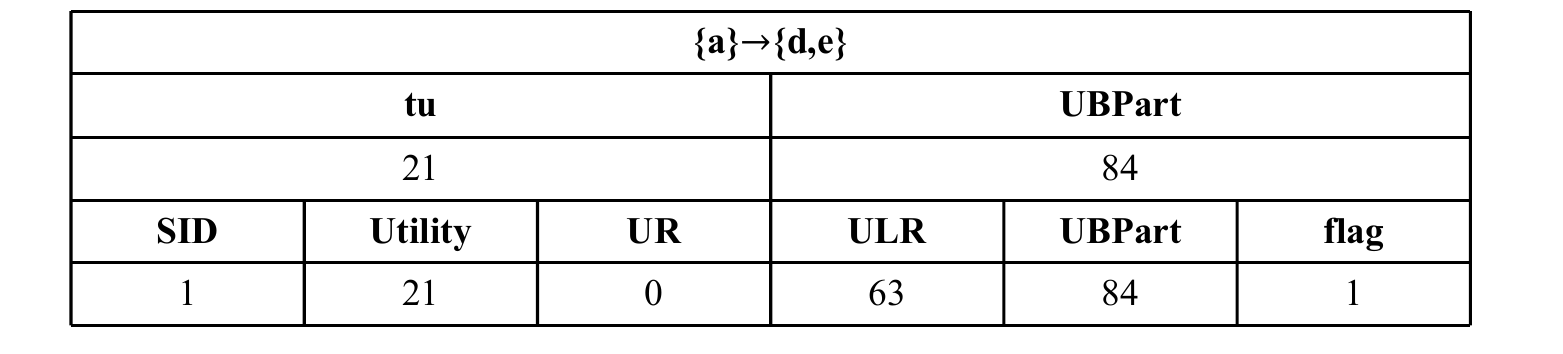}
    \caption{The PUTable w.r.t. $\{a\} \to \{d,e\}$.}
    \label{fig:PUTable}
\end{figure}

\subsection{Additional optimization}

The SRIU algorithm also utilizes two additional optimizations to improve its performance further. The first one compresses the storage space for each item's sequence identifiers. The essential operations in SRIU involve the recording of the \textit{sid} associated with each item and the computation of intersections among items. However, Bitmap is not suitable for sparse storage problems. Thus, SRIU adopts the Roaring bitmap structure \cite{chambi2016better}, which partitions integer types into high and low 16-bit segments, eliminating the redundancy typically encountered in Bitmap. This approach ensures that the Roaring Bitmap achieves high efficiency in executing intersection and union operations while minimizing memory usage. We can observe that if the antecedent or consequent adds the largest item, this sequential rule can not continue to perform left expansion or right expansion, respectively. Thus, SRIU records the largest item in each sequence in a map \textit{MaxIS} and terminates the corresponding expansion when it meets the largest item.
		
\subsection{The SRIU Algorithm}

Based on the discussions mentioned above, we propose an algorithm to discover high-utility sequential rules with an increasing utility ratio, called SRIU. SRIU generates all $1 * 1$ sequential rules. According to the \textit{sumEU} and $e$-index, it determines whether they should be expanded and adopts which expansion method. Note that the \textit{left-right} expansion and the \textit{right-left} expansion use two sets of different functions to generate all HUSRs with an increasing utility ratio. The disparities between the \textit{left-right} expansion and \textit{right-left} expansion methods pertain primarily to the sequence of expansion and the involvement of \textit{CONFP}. It is noteworthy that the \textit{right-left} expansion, along with its procedural details, has been exhaustively expounded upon in the US-Rule algorithm \cite{huang2023us}. Hence, in this section, our focus is exclusively directed toward elucidating the nuances of the \textit{left-right} expansion approach. As for the \textit{right-left} expansion, the differences are shown in $\textbf{[]}$. The pseudocode for SRIU is shown in Algorithm \ref{algo:SRIU_algorithm}.

\begin{algorithm}[!h]
\footnotesize
	\caption{The SRIU algorithm}
	\label{algo:SRIU_algorithm}
	\LinesNumbered
	\KwIn{\textit{SDB}, \textit{minUtil}, \textit{minconf}, \textit{minRatio}.}
	\KwOut{All HUSRs with increasing utility ratio.}
        initialize a Hash map $M$ to store each item $i$ in \textit{SDB} and $\textit{SEU}(i)$;
        \While {$\exists$ $i$ $\in$ $M$ and $\textit{SEU}(i)$ $\textless$ \textit{minUtil}} {
		remove all unpromising items from $M$ and in \textit{SDB}; \\
		update the SEU of all items in $M$; 
	}
        scan \textit{SDB} again to build \textit{IPEUM}, $R$ and \textit{MaxIS}; \\
        remove all item pairs in \textit{IPEUM} which estimated utility less than \textit{minUtil};\\
	remove all unpromising sequential rules from $R$; \\ 
        build the TempTable for each sequential rule $r$ in $R$;
        
	\For{$r$ $\in$ $R$}{
            calculate the \textit{conf}$(r)$;\\
	    \If{\rm \textit{tu}(r) $\ge$ \textit{minUtil} and \textit{conf}$(r)$ $\ge$ \textit{minconf}}{
		        save $r$;
		}
	    
	    \If{\textit{sumEU} $\ge$ \textit{minUtil} $\times$ \textit{minRatio}}{              
		        \If{$e$-index $\textless$ 0}{ 
                    convert the TempTable to the UTable where $\textit{flag} = 1$; \qquad(\textbf{left-right} expansion);\\
                    call \textbf{DoubleExpansion}($r$, \textit{SDB}, \textit{IPEUM}, \textit{minUtil}, \textit{minconf}, \textit{minRatio});\\
                    \If{UTable.\textit{UBPart} $\ge$ \textit{minUtil} and \textit{conf}$(r)$ $\ge$ \textit{minconf}}{
                        call \textbf{SingleExpansion}($r$, \textit{SDB}, \textit{IPEUM}, \textit{minUtil}, \textit{minconf}, \textit{minRatio});
                    }
                }\Else{
                    convert the TempTable to the UTable where $\textit{flag} = 0$;\qquad(\textbf{right-left} expansion);\\
                    call \textbf{DoubleExpansion}(($r$, \textit{SDB}, \textit{IPEUM}, \textit{minUtil}, \textit{minconf}, \textit{minRatio});\\
                    \If{UTable.\textit{UBPart} $\ge$ \textit{minUtil}}{
                        call \textbf{SingleExpansion}($r$, \textit{SDB}, \textit{IPEUM}, \textit{minUtil}, \textit{minconf}, \textit{minRatio});
                    }
                }                
		}  	   	   
	}	
\end{algorithm}

SRIU takes a sequence database \textit{SDB}, a minimum utility threshold \textit{minUtil}, a minimum confidence threshold \textit{minconf}, and a minimum increasing utility ratio \textit{minRatio} as its input. All of these thresholds are predefined by the user, and then it outputs all HUSRs satisfying the increasing utility ratio. The initial step of the SRIU algorithm involves scanning the \textit{SDB} to extract all items and compute their \textit{SEU}. Simultaneously, these key-value pairs are stored in a hash map denoted as $M$ (line 1). According to \textit{SEUP}, SRIU constantly removes the unpromising items from $M$ and \textit{SDB} and updates the \textit{SEU} of the remaining items in $M$ until there are no unpromising items in \textit{SDB} (lines 2-5). SRIU scans \textit{SDB} again, generates all $1 * 1$ sequential rules, and calculates the \textit{SEU} and \textit{seq}$(r)$ of them, then they are stored in $R$, which is a set. What's more, SRIU initializes \textit{IPEUM} and records the largest item (according to $\succ_{lex}$ order) in each sequence in \textit{MaxIS} (line 6). According to \textit{IPEUP} and \textit{SEUP}, SRIU removes the item pairs in \textit{IPEUM} whose estimated utility is less than \textit{minUtil} and removes all sequential rules for which $\textit{SEU}(r)$ is less than \textit{minUtil} in $R$ (lines 7-8). Then, build the TempTable for all promising sequential rules $r$ in $R$. Meanwhile, calculate the \textit{tu}, \textit{sumEU}, and $e$-index of $r$ (line 9). Subsequently, for each $r$, it determined \textit{conf}(r) if $r$ is a high-utility sequential rule: its \textit{tu} and \textit{conf} values meet or exceed the predefined thresholds of \textit{minUtil} and \textit{minconf}, respectively (lines 11-14). Next, according to \textit{EULEP} and \textit{REEUP}, $r$ can be extended depending on whether its \textit{sumEU} is larger than \textit{minUtil} (line 15). If $r$ can be extended, we need to select a specific expansion method based on the $e$-index. SRIU performs \textit{left-right} expansion when $e$-index is less than 0 (lines 16-22), else executes \textit{right-left} expansion (lines 23-29). SRIU uses the variable \textit{flag} to mark the algorithm's turn to which expansion strategy and adopts the corresponding pruning strategies. At first, SRIU converts the TempTable to the UTable after selecting the specific expansion method (lines 17 and 24). Note that \textit{CONFP} can only be used in \textit{left-right} expansion (line 19).

\begin{algorithm}[!h]
\footnotesize
	\caption{The DoubleExpansion Procedure}
	\label{algo:leftProcedure}
	\LinesNumbered
	\KwIn{$r$, \textit{SDB}, \textit{IPEUM}, \textit{minUtil}, \textit{minconf}, \textit{minRatio}.}
	\KwOut{All HUSRs with increasing utility ratio.}

        initial \textit{rSet} $\leftarrow$ $\emptyset$ and obtain the largest item in the consequent of $r$: $n$; \quad[obtain the largest item in the antecedent of $r$: $m$]\\
        calculate the \textit{riseU} = $u(r)$ $\times $ \textit{minRatio};\\
	\For{\rm  sequence $s \in \textit{seq}(r)$ }{
	       \If{\textit{UExtend}$(r,s)$ = 0}{
                    continue;
                }
	
     \For{$i \in s$ $\wedge$ $i \in$ \textit{LE} \quad[$i \in s \wedge i \in \textit{RE}$]}{ 
	        
                 \If{\textit{IPEUM}(i,n) \quad[\textit{IPEUM}(m,i)] $\textless$ \textit{max}(\textit{minUtil}, \textit{riseU})}{
                    continue;
                }

                $t$ $\leftarrow$ (\{$X$\} $\cup$ $i \to$ \{$Y$\}); \quad[$t$ $\leftarrow$ (\{$X$\} $\to$ \{$Y$\} $\cup$ $i$)];\\
                create UTable for $t$ where \textit{Flag} = 1; \quad[\textit{Flag} = 0]; \\
                \If{\rm  $i$ equal to \textit{MaxIS}$(s)$}{
                    \textit{UL}$(r,s)$ = 0 and \textit{UR}$(r,s)$ += \textit{ULR}$(r,s)$; 
                    \quad[\textit{UR}$(r,s)$ = 0 and \textit{UL}$(r,s)$ += \textit{ULR}$(r,s)$;]\\
                }
                add $t$ to the \textit{rSet} and update the UTable of $t$;\\	        
	    }	    
	}
	
	\For{\rm $r \in$ rSet}{
            calculate the \textit{conf}$(r)$ from its UTable;\\
	    \If{\rm  \textit{tu}(r) $\ge$ max(\textit{minUtil}, \textit{riseU}) and \textit{conf}$(r)$ $\ge$ \textit{minconf}}{
		        save $r$;
		}
          \If{\textit{UTable}.\textit{UBTotal} $\ge$ max(\textit{minUtil}, \textit{riseU})  }{
		        call \textbf{DoubleExpansion}(($r$, \textit{SDB}, \textit{IPEUM}, \textit{minUtil}, \textit{minconf}, \textit{minRatio});\\
		}
  
	    \If{\rm  \textit{UTable}.\textit{UBPart} $\ge$ max(\textit{minUtil}, \textit{riseU}) and \textit{conf}$(t)$ $\ge$ \textit{minconf}(\textit{UTable}.\textit{UBPart} $\ge$ max(\textit{minUtil}, \textit{riseU}))}{
                 call \textbf{SingleExpansion}($r$, \textit{SDB}, \textit{IPEUM}, \textit{minUtil}, \textit{minconf}, \textit{minRatio});
            }
        }
\end{algorithm}

\begin{algorithm}[h]
\footnotesize
	\caption{The SingleExpansion Procedure}
	\label{algo:rightProcedure}
	\LinesNumbered
	\KwIn{$r$, \textit{SDB}, \textit{IPEUM}, \textit{minUtil}, \textit{minconf}, \textit{minRatio}.}
	\KwOut{All HUSRs with increasing utility ratio.}

       initial \textit{rSet} $\leftarrow$ $\emptyset$ and obtain the largest item in the antecedent of $r$;\\
        calculate the \textit{riseU} = $u(r)$ $\times$ \textit{minRatio};\\
	\For{\rm sequence $s \in \textit{seq}(r)$ }{
	       \If{\textit{UExtend}$(r, s)$ = 0}  {
                    continue;
                }	    
	    \For{\rm $i \in s \wedge i \in \textit{UR}$ \quad[$i \in s \wedge i \in \textit{UR}$]}{	      
                 \If{\textit{IPEUM}(m,i) \quad[\textit{IPEUM}(i,n)] $\textless$ \textit{minUtil}}{
                    continue;
                }
                $t$ $\leftarrow$ (\{X\} $\to$ \{Y\} $\cup$ $i$); \quad[$t$ $\leftarrow$ (\{X\} $\cup$ $i$ $\to$ \{Y\});] \\
                create PUTable for t where \textit{Flag} = 1; \quad[\textit{Flag} = 0;] \\
                \If{\rm $i$ equal to \textit{MaxIS}$(s)$}{
                    \textit{UExtend}$(r, s)$ = 0; \\
                }
                add $t$ to \textit{rSet} and update the PUTable of $t$;\     
	    }
	}
	
	\For{$r$ $\in$ rSet}{
            calculate the \textit{conf}$(r)$ from its PUTable;\\
	    \If{\rm  \textit{tu}(r) $\ge$ \textit{minUtil} and \textit{conf}$(r)$ $\ge$ \textit{minconf}}{
		        save $r$;
		}
          \If{\rm  \textit{PUTable}.\textit{UBPart} $\ge$ max(\textit{minUtil}, \textit{riseU}) and \textit{conf}$(t)$ $\ge$ \textit{minconf}(\textit{PUTable}.\textit{UBPart} $\ge$ max(\textit{minUtil}, \textit{riseU}))}{
		    call \textbf{SingleExpansion}($r$, \textit{SDB}, \textit{IPEUM}, \textit{minUtil}, \textit{minconf}, \textit{minRatio});
		}
        }
\end{algorithm}

In the DoubleExpansion procedure, SRIU first creates an initial \textit{rSet}, which is a list containing all expanded rules that are generated from $r$, and it obtains the largest item in the subsequent $n$ of $r$. In the \textit{right-left} expansion, the goal is to obtain the largest item $m$ in the antecedent (line 1). Calculate the utility \textit{riseU} that meets the utility increasing ratio (line 2). For each sequence containing $r$, if the \textit{UExtend} of $r$ in $s$ is equal to 0, which means no item can be extended in this sequence, then stop the expansion of $r$ in $s$ (lines 4-6). If the item $i$ can be used in the left expansion in $r$, which is denoted as $i \in \textit{LE}$, similarly, in the \textit{right-left} expansion, the range of extended items is $i \in \textit{RE}$. According to \textit{IPEUP}, if \textit{IPEUM}$(i,n)$ is less than the largest of \textit{minUtil} and \textit{riseU}, SRIU stops the expansion $r$ with $i$. As for the \textit{right-left} expansion, the condition of \textit{IPEUP} is $\textit{IPEUM}(m,i)$ (lines 8-10). SRIU lets $t$ be the sequential rule extended by $r$ adds $i$, and updates the UTable for $t$ where \textit{Flag} equals 1, but \textit{Flag} is 0 in the \textit{right-left} (lines 11-12). Moreover, if $i$ is the largest item in $s$, then $r$ could not execute the left expansion, which means that $\textit{UL}(r,s)$ should be set to 0, and the items that could perform both left and right expansion are only used for right expansion. For the same reason, in the \textit{right-left} expansion, $\textit{UR}(r,s)$ = 0 and $\textit{ULR}(r,s)$ is changed (lines 13-15). And then, add $t$ to the rSet and update the UTable of $t$ (line 16). Finally, for each sequential rule $t$ in \textit{rSet}, if it is the high-utility sequential rule and satisfies the increasing utility ratio, that saves it (lines 21-23). According to \textit{EULEP} (in the \textit{right-left} is \textit{REEUP}), only the \textit{UBTotal} of $t$ satisfies the conditions, then $t$ could execute the DoubleExpansion procedure continually (lines 24-26). Similarly, only if \textit{UBPart} of $t$ equals or is greater than the largest of \textit{minUtil} and \textit{riseU} and the confidence of $t$ satisfies the conditions of \textit{CONFP} that $t$ could perform right expansion. Note that \textit{CONFP} is forbidden in \textit{right-left} expansion (lines 27-29).

The SingleExpansion procedure is similar to the DoubleExpansion procedure; there are only three differences. First, in order to properly use \textit{IPEUP}, the SingleExpansion procedure obtains the largest item $m$ in the antecedent of $r$; while during the \textit{right-left} expansion, it obtains the largest item $n$ in the consequent (line 1) and applies \textit{IPEUP} (lines 8-10). Secondly, because the SingleExpansion procedure only considers the left or right expansion, it uses PUTable to store information about $r$, which is more compact than UTable (line 12). In the end, since it is not allowed to perform the left expansion after the right expansion, when the expandable item $i$ is the maximum item in the sequence, the \textit{UExtend}$(r,s)$ can be set to 0. Similarly, this situation can also be applied in the \textit{right-left} expansion (lines 13-15).

\subsection{Complexity analysis}

In this section, we conduct a theoretical analysis of the time and space complexity of the proposed SRIU algorithm. We define $M$ as the number of sequences in the database \textit{SDB}, $L$ as the average sequence length (number of itemsets), and $|I|$ as the number of distinct items.

\subsubsection{Time complexity}

The time complexity of SRIU is analyzed phase by phase. Firstly, SRIU scans the database to compute the SEU for each item, yielding a complexity of $O(M \times L)$. It then constructs the IPEUM structure, which involves evaluating all $O(|I|^2)$ item pairs over the $M$ sequences, leading to a worst-case time of $O(|I|^2 \times M)$. Second, SRIU generates all promising $1 * 1$ sequential rules $r$ = $\{X\} \to \{Y\}$. The number of such candidate rules is $O(|I|^2)$. For each rule, building the TempTable requires processing all sequences containing it, resulting in a complexity of $O(|I|^2 \times M)$. Third, the core of SRIU lies in the rule expansion phase, which employs depth-first search with left-right or right-left expansion. The search space comprises all possible sequential rules formed by partitioning subsets of items into antecedent and consequent. In the worst case—where no pruning occurs—each of the $|I|$ items may reside in the antecedent, the consequent, or be absent from the rule. This results in a search space of size $O(3^{|I|})$, confirming the inherent combinatorial complexity of the problem. Thus, the overall worst-case time complexity of SRIU is $O(|I|^2 \times M + 3^{|I|})$. Although the proposed pruning strategies do not alter this asymptotic bound under adversarially crafted inputs, they drastically reduce the practical search space on real-world datasets, as demonstrated empirically in Section \ref{sec:experimental}.

\subsubsection{Space complexity}

The space complexity of SRIU is dominated by two components: the IPEUM structure and the utility tables for candidate rules. The IPEUM structure stores a utility value for each item pair, requiring $O(|I|^2)$ space. SRIU also maintains utility tables (TempTable, UTable, PUTable) for candidate rules. If $|R_c|$ is the number of candidate rules kept after pruning, and each rule's information consumes $O(M)$ space in the worst case (storing an entry per sequence), the space complexity for these structures is $O(|R_c| \times M)$. The value of $|R_c|$ is itself bounded by the exponential search space, but it is controlled in practice by the pruning strategies. The overall space complexity of SRIU is $O(|I|^2 + |R_c| \times M)$. The memory is primarily determined by the size of the IPEUM and the number of candidate rules retained during the mining process. The adoption of the Roaring Bitmap structure mitigates memory usage associated with the $M$ factor, particularly benefiting large and sparse datasets.
  
\section{Experimental Results}  \label{sec:experimental}

The experimental results of the SRIU algorithm can be divided into two parts. The first part is a performance evaluation of SRIU, and the second is an evaluation of its mining results. To comprehensively assess the impact of the different strategies proposed in this paper, we devised four different variants of SRIU, including SRIU$_{V1}$ with Roaring Bitmap, SRIU$_{V2}$ with Bitmap, SRIU$_{V3}$ absence of \textit{CONFP} strategy, and SRIU$_{V4}$ absence of \textit{IPEUP} strategy, respectively. SRIU$_{V1}$ is the complete algorithm. It is important to note that both SRIU$_{V3}$ and SRIU$_{V4}$ are realized by leveraging the underlying Roaring Bitmap. Besides, the increasing utility ratio $\textit{minRatio}$ is set to 0, which means the utility between sequential rules $\alpha$ and $\beta$ generated by $\alpha$ satisfies $u(\alpha) \leq u(\beta)$. It ensures that all mined rules satisfy a non-decreasing utility property, meaning the expansion of a rule never leads to a utility loss. This is already a valuable guarantee for applications like e-commerce recommendation or financial surveillance, where one wishes to avoid promoting downgrading paths or missing escalating risks. Most of our performance evaluations (Sections \ref{sec:performanceSRIU} and \ref{sec:evaluationSRIU}) use this setting to establish a baseline and facilitate a fair comparison of the algorithm's efficiency against its own variants.

\subsection{Datasets}

All algorithms are implemented in Java, and the source code of the algorithms is publicly available on GitHub: https://github.com/DSI-Lab1/SRIU. Experiments are conducted on a PC equipped with a 64-bit Windows 10 operating system, an 11th-generation 64-bit Core i7 processor with 2.5 GHz, and 32 GB of RAM.

To assess the efficacy and scalability of the SRIU algorithm, we subjected it to testing across two distinct categories of datasets: four real-world datasets (Bible, Kosarak10K, Leviathan, and Sign) and two synthetic datasets (Syn10K and Syn20K). These datasets are available on SPMF\footnote{SPMF: \url{http://www.philippe-fournier-viger.com/spmf/}}. Four real-world datasets contain single-item-based sequences, while the two synthetic datasets contain multiple-item-based sequences. Table \ref{tab:experiments} outlines the specifics of the datasets. The notations in Table \ref{tab:experiments} are represented respectively: 1) $M$ represents the number of sequences in \textit{SDB}, 2) m represents the number of itemsets in \textit{SDB}, 3) $|I|$ is the number of the distinct items in \textit{SDB}, 4) $L$ means the average size of each sequence in \textit{SDB}, 5) $n$ means the average size of each itemset in \textit{SDB}. Typically, higher values of $L$ and $n$ indicate a greater dataset density. To clearly describe the density of each dataset, we define the occurrence rate of each item in each dataset as \textit{density}(\textit{dataset}) = \textit{Avg}(\textit{occur(item)}) = $(\sum_{item \in \textit{SDB}}{ \frac{\textit{seq}(item)}{M}}) / |I|$. The \textit{density} of each testing dataset is shown in the column \textit{density} in Table \ref{tab:experiments}. Note that the parameter settings are based on the user's prior knowledge and the dataset's characteristics. 

\subsection{Performance of SRIU}
\label{sec:performanceSRIU}
\begin{table}[!ht]
    \fontsize{5.5pt}{9pt}\selectfont
    \caption{Information on experimental datasets}
    \label{tab:experiments}
    \centering
    \setlength{\tabcolsep}{3mm}{
	\begin{tabular}{ccccccc}
		  \hline
            \textbf{Dataset} & $M$ & $m$ & $|I|$ & $L$ & $n$ & \textbf{density} \\
		  \hline
		  Bible & 36,369 & 13,905 & 13905 & 17.85 & 1.00 & 0.12\%\\ 
		  Kosarak10k & 10,000 & 10,094 & 10,094 & 8.41 & 1.00 & 0.008\%\\
		  Leviathan & 5,834 & 9026 & 9025 & 26.34 & 1.00 & 0.29\%\\
		  Sign & 730 & 267 & 267 & 52.00 & 1.00 & 19.47\%\\
            Syn10K & 10,000 & 36,289 & 7,312 & 27.00 & 4.34 & 0.37\%\\
            Syn20K & 20,000 & 64,722 & 7,442 & 26.87 & 4.32 & 0.36\% \\
		  \hline
	\end{tabular}
    }
\end{table}

\subsubsection{Runtime analysis}

\begin{figure}[htbp]
    \centering
    \includegraphics[scale=0.46, trim=20 0 0 20]{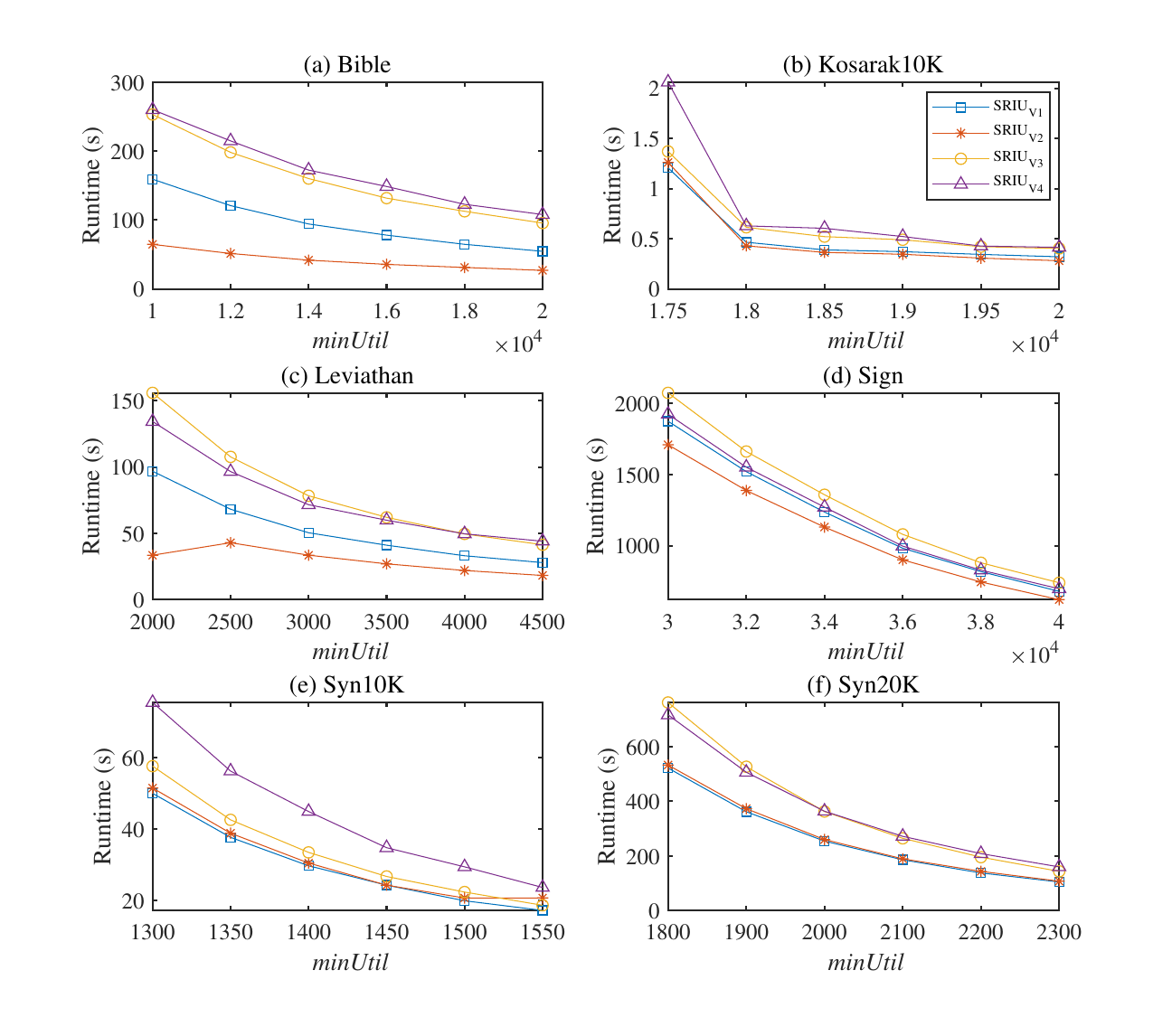}
    \caption{The runtime usage on different thresholds with increasing utility \textit{minRatio} = 0.}
    \label{fig:compare_runtime}
\end{figure}

Fig. \ref{fig:compare_runtime} shows the running time of each variant of SRIU. From a data structure perspective, it is clear that SRIU$_{V2}$ outperforms SRIU$_{V1}$ on the Bible and Leviathan datasets. Conversely, similar performance is observed in the remaining datasets. Roaring Bitmap incorporates three inherent container types: the ArrayContainer, the BitmapContainer, and the RunContainer, each serving specific scenarios within the framework. The default container in Roaring Bitmap is ArrayContainer, and it will be changed to BitmapContainer when the number of elements in Roaring Bitmap is greater than 4096. According to the frequency of items (the \textit{density (dataset)} can reveal that) within the dataset, it is discernible that, during the experimental phase, the container utilized in Roaring Bitmaps is the ArrayContainer. This container employs a binary search mechanism for executing union operations, resulting in a time complexity of $O(log N)$, where $N$ can be estimated as $N$ = \textit{density(dataset)} $\times$ $M$. For example, for the Bible dataset, $N$ = 0.12\% $\times$ 36369 = 43.6428. Compared with the O(1) time complexity of Bitmap, SRIU with Roaring Bitmap will need more time to update the \textit{sid} of sequential rules. Although the $N$ of Syn10K and Syn20K are 37 and 72, respectively, the running times of SRIU$_{V1}$ and SRIU$_{V2}$ are close. This outcome can be attributed to the principal temporal expenditure of both SRIU$_{V1}$ and SRIU$_{V2}$ during the sequence rule generation phase. The rationale behind this phenomenon lies in the significantly expanded search space inherent to synthetic datasets compared to their real-world counterparts.

As for SRIU$_{V3}$ and SRIU$_{V4}$, it is clear that the CONFP and IPEUP strategies are efficient. \textit{IPEUP} can reduce the redundant computation of both the slightly and moderately dense and sparse datasets, except for the Sign dataset. Sign is extremely dense since it only has 730 sequences, but the average size of each sequence is 52, and the average probability of occurrence of each item in Sign is 19.47\%, which means the estimated utility of item pairs is notably elevated. Consequently, the effectiveness of the \textit{IPEUM} strategy may not exhibit pronounced discernibility. As for the \textit{CONFP} strategy, since it is applied in the \textit{left-right} expansion process, the number of $1 * 1$ sequential rules to perform \textit{left-right} expansion is essential. Table \ref{tab:leftOrRightFirst} gives an example of the number of $1 * 1$ sequential rules to execute the expansion strategy of each dataset in a specific \textit{minUtil}. The common situation is that the more $1 * 1$ sequential rules ready to perform \textit{left-right} expansion, the more impact \textit{CONFP} strategy will have. In summary, the pruning strategies \textit{IPEUP} and \textit{CONFP} can effectively reduce SRIU's search space and improve its running speed. Furthermore, when applied to single-item-based sequences in these experiments, the Roaring Bitmap encounters suboptimal runtime performance due to its use of the ArrayContainer. Nevertheless, with the expansion of the search space manifested by an increase in the quantity and magnitude of sequences and itemsets, the Roaring Bitmap demonstrates a performance improvement.

\begin{table}[!ht]
    \fontsize{7pt}{9pt}\selectfont
    \caption{An example of the expansion strategy in each dataset in specific \textit{minUtil}.}
    \label{tab:leftOrRightFirst}
    \centering
    \setlength{\tabcolsep}{3mm}{
	\begin{tabular}{cccc}
		  \hline
            \textbf{Dataset} & \textit{minUtil} & $\textit{left-right}$ expansion & $\textit{right-left}$ expansion \\
		  \hline
		  Bible & 20000 & 7,640 & 843 \\ 
		  Kosarak10k & 17500 & 214 & 27 \\
		  Leviathan & 2000 & 8134 & 7618\\
		  Sign & 40000 & 2121 & 1667 \\
            Syn10K & 1300 & 17,115 & 16,611 \\
            Syn20K & 2300 & 19,764 & 17,133 \\
		  \hline
	\end{tabular}
    }
\end{table}

\subsubsection{Memory usage analysis}
\begin{figure}[htbp]
    \centering
    \includegraphics[scale=0.46, trim=30 30 0 30]{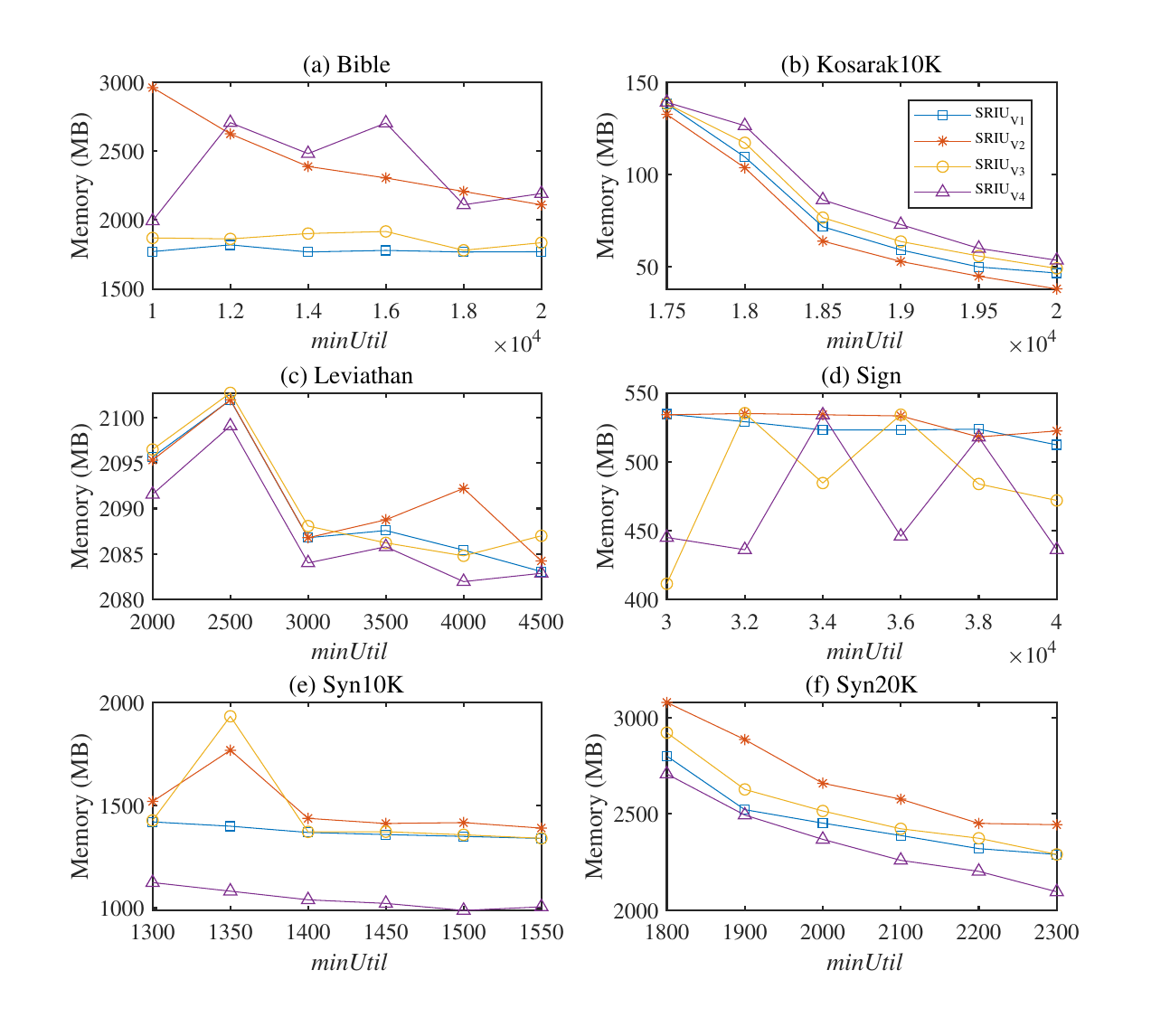}
    \caption{The consumption of memory on different thresholds with increasing utility \textit{minRatio} = 0.}
    \label{fig:compare_memory}
\end{figure}

As depicted in Fig. \ref{fig:compare_memory}, it is evident that SRIU$_{V1}$ exhibits superior memory usage compared to SRIU$_{V2}$. This outcome is attributed to the inherent nature that can effectively cope with the sparse storage of the Roaring Bitmap, which, unlike the Bitmap, obviates the necessity to allocate space equivalent to the dataset size for \textit{sid} storage for every item or sequential rule. Furthermore, a notable observation emerges: SRIU$_{V2}$ consistently exhibits the highest memory consumption on nearly all datasets, especially on datasets with a large number of sequences. These cases reflect Bitmap's limitations; even though the sequential rule only appears in 10 sequences, Bitmap still needs to allocate a bit of space equal to the size of the dataset. Therefore, if a dataset has more sequences and candidates, then the use of Bitmap needs to consume more memory. On the Kosarak10k dataset, the performance of SRIU$_{V2}$ emerges as optimal. This can be attributed to the efficacy of the \textit{SEUP} strategy, which effectively sieves out unpromising candidates. As a result, the memory overhead is notably minimized. From the $e$ and $f$ in Fig. \ref{fig:compare_memory}, we can find that SRIU$_{V4}$ consumes less memory than SRIU$_{V1}$. The rationale behind this phenomenon lies in the nature of the synthetic datasets Syn10K and Syn20K, which are characterized by their composition of multiple-item-based sequences and their large size. Consequently, the \textit{IPEUP} strategy necessitates the construction and storage of estimated utilities for a plethora of item pairs. This operation, in turn, contributes to substantial memory consumption. From Fig. \ref{fig:compare_memory}(a), we can see that the \textit{IPEUP} strategy can effectively reduce the search space and accelerate SRIU. On the datasets Kosarak10k and Leviathan, a striking similarity exists in memory consumption between SRIU$_{V4}$ and SRIU$_{V1}$. Furthermore, an examination of $b$ and $c$ in Fig. \ref{fig:compare_runtime} reveals that the \textit{IPEUP} strategy effectively enhances SRIU's performance. Although the \textit{CONFP} strategy can speed up SRIU, there is no obvious effect on saving memory overhead. The reason may be that we only consider the highest memory consumption of SRIU. In summary, Roaring Bitmap can reduce memory consumption, and \textit{IPEUP} performs better in terms of sacrificing memory for time.

\subsubsection{Scalability analysis}
\begin{figure}[ht]
    \centering
    \includegraphics[scale=0.36, trim=30 0 0 0]{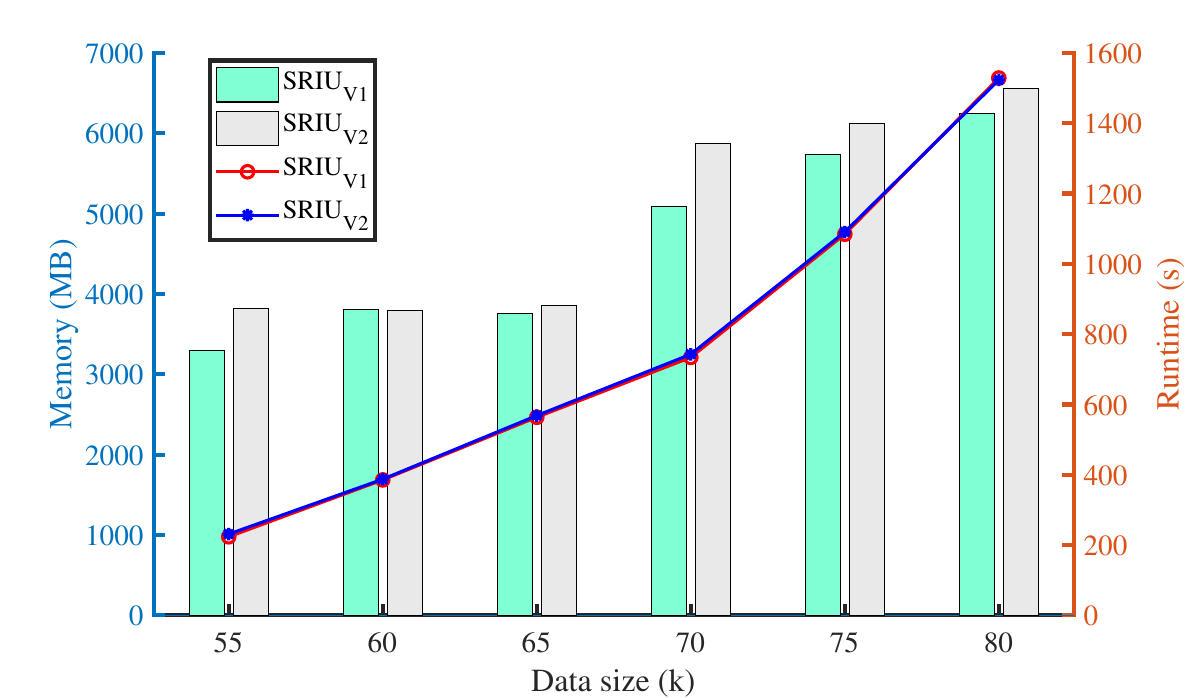}
    \caption{Scalability when \textit{minconf} = 0.6, \textit{minUtil} = 6000, and \textit{minRatio} = 0.}
    \label{fig:scalability}
\end{figure}

In this subsection, we selected synthetic datasets spanning sizes ranging from 55k to 80k to gauge the scalability of SRIU. We evaluated the scalability of SRIU in terms of running time and memory consumption. The default values of \textit{minconf}, \textit{minUtil}, and \textit{minRatio} are set to 0.6, 6000, and 0, respectively. The experimental results are shown in Fig. \ref{fig:scalability}. It is evident that as the dataset size increases, both the running time and memory consumption proportionally increase in a linear trend. Moreover, an assessment of the efficiency of SRIU$_{V1}$ and SRIU$_{V2}$ has been undertaken. As depicted in Fig. \ref{fig:scalability}, it is evident that SRIU$_{V1}$ yields enhanced memory conservation compared to SRIU$_{V2}$ while exhibiting comparable time performance.

\begin{figure}[htbp]
    \centering
    \includegraphics[scale=0.53, trim=30 40 0 20]{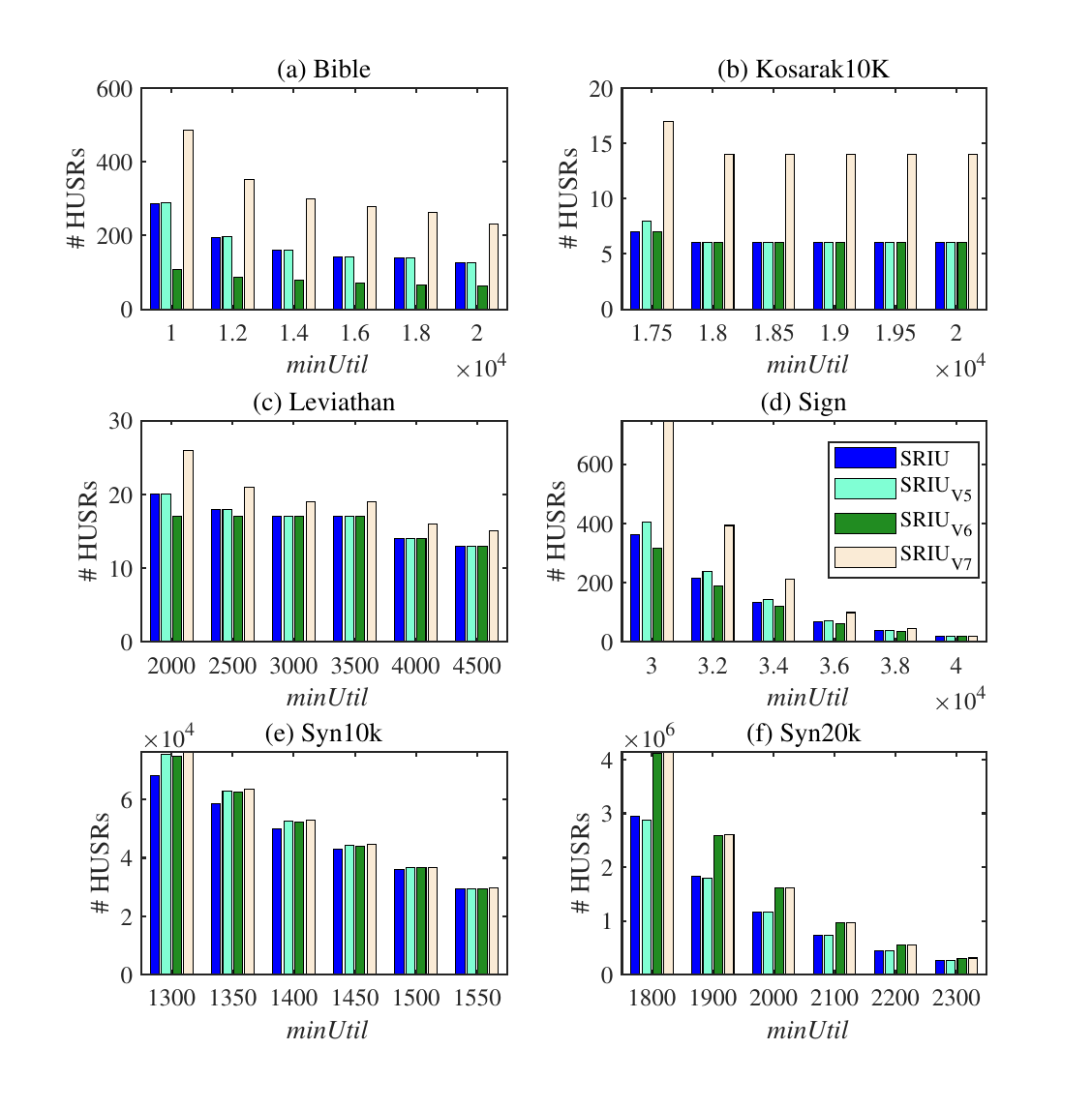}
    \caption{The number of HUSRs on different thresholds with increasing utility ratio = 0.}
    \label{fig:husrs}
\end{figure}

\subsection{Evaluation of mining results of SRIU}
\label{sec:evaluationSRIU}

We further evaluate SRIU's mining results using three key variants designed to isolate the impact of our core innovations:
\begin{itemize}
    \item \textbf{SRIU$_{V5}$}: use only the \textit{left-right} expansion strategy (fixed direction);
    \item \textbf{SRIU$_{V6}$}: use only the \textit{right-left} expansion strategy (fixed direction);
    \item \textbf{SRIU$_{V7}$}: disables the increasing utility ratio constraint, thereby solving the standard HUSRM problem akin to US-Rule \cite{huang2023us}.
\end{itemize}

The intuition is that SRIU$_{V7}$ can discover more HUSRs than other variants because it lacks the utility constraint that mandates adherence to an increasing ratio. From Fig. \ref{fig:husrs}, we can observe that the number of HUSRs mined by SRIU$_{V7}$ is significantly higher than other variants, especially on single-item-based datasets. However, on multiple-item-based datasets, Syn10K and Syn20K, the number of HUSRs of SRIU$_{V7}$ does not show significant differences compared to other variants. This phenomenon can be attributed to the fact that a substantial proportion of HUSRs can be ascertained through the incorporation of a novel item, specifically one situated within the same itemset as the last item within the antecedent or consequent. In this way, more HUSRs can reach the increasing utility ratio, especially if the ratio is set to 0. Furthermore, we can see from Fig. \ref{fig:husrs} that the number of HUSRs in SRIU is always lower than either SRIU$_{V5}$ or SRIU$_{V6}$. This is because SRIU comprehensively considers the two expansion methods and chooses different expansion methods according to the value of $e$-index. The $e$-index only considers the average utility, which can be used for left and right expansions of the sequential rule. In some cases, although $e$-index can help choose a better expansion method, the largest item will terminate the generation of sequential rules. Therefore, if we only take the \textit{left-right} expansion or \textit{right-left} expansion, it may generate more HUSRs. The following section will conduct an analytical examination of the validity of the mining outcomes produced by SRIU.

\begin{figure}[!htbp]
    \centering
    \includegraphics[scale=0.48, trim=40 0 0 0]{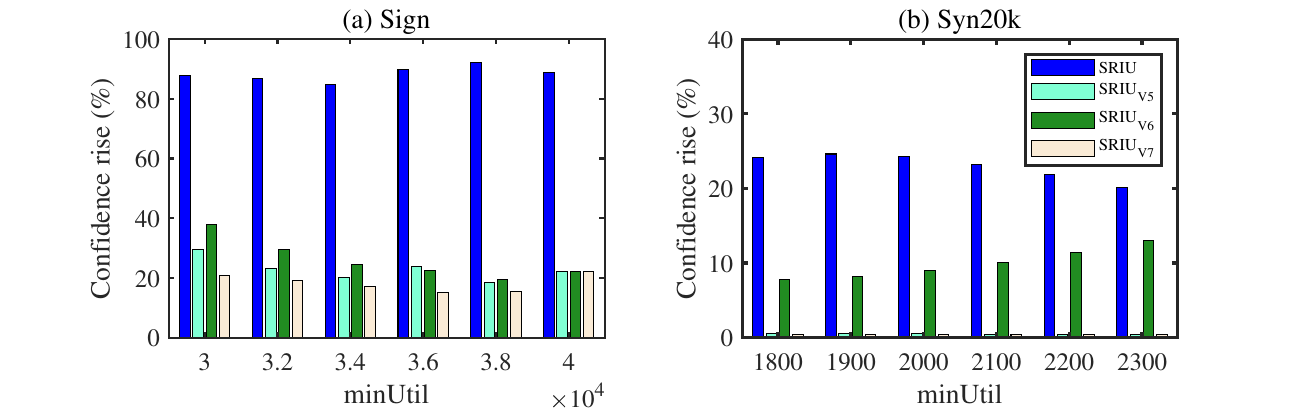}
    \caption{The confidence rise ratio of HUSRs on different thresholds with \textit{minRatio} = 0.}
    \label{fig:confRise}
\end{figure}

\begin{figure}[!htbp]
    \centering
    \includegraphics[scale=0.48, trim=40 0 0 0]{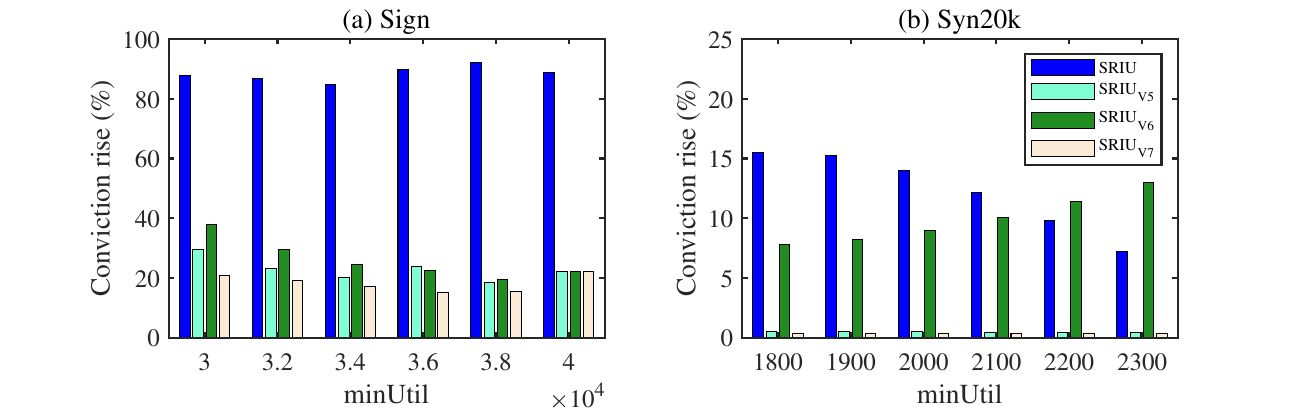}
    \caption{The conviction rise ratio of HUSRs on different thresholds with \textit{minRatio} = 0.}
    \label{fig:convRise}
\end{figure}

Note that there are two sequential rules, $\alpha$ and $\beta$, where $\beta$ is generated by $\alpha$ adding a new item. The method assesses the relationship between $\alpha$ and $\beta$ by evaluating indicators of confidence and conviction to determine whether any promotions have occurred. Consequently, the Sign and Syn20K datasets have been selected for assessing the correlation between $\alpha$ and $\beta$. In this analysis, the association between $\alpha$ and $\beta$ is gauged through the rate of enhancement of metrics encompassing confidence and conviction. First, we analyze it from a confidence standpoint. According to Fig. \ref{fig:confRise}, in the test datasets, the rate of confidence enhancement observed in SRIU exceeds that in its respective variants. Moreover, the confidence increase rate of SRIU, SRIU$_{V5}$, and SRIU$_{V6}$ is better than SRIU$_{V7}$, which means discovering HUSRs with an increasing utility ratio can help find more reliable HUSRs. Then, from Fig. \ref{fig:convRise}, the first observation is that SRIU, SRIU$_{V5}$, and SRIU$_{V6}$ all have higher conviction rates than SRIU$_{V7}$, which represents the mining HUSRs by adopting an increasing utility ratio as more authentic. In Sign, SRIU achieves the biggest increase in conviction; however, in Syn20K, in some specific utility thresholds, the increase in conviction of SRIU$_{V6}$ is the biggest. From Fig. \ref{fig:confRise}(b), we can observe that the rate of confidence increase decreases, and from Table \ref{tab:leftOrRightFirst}, SRIU adopts more \textit{left-right} expansion. According to the formulation of conviction, the increase of confidence shows a downward trend, which means a larger denominator, and \textit{left-right} expansion represents the molecular invariance for the most sequential rules during expansion. As a result, the SRIU's conviction shows a downward trend. Moreover, since we set the default threshold of confidence as 0.6 and based the conclusion in \cite{luna2018optimization} conviction changes the most in the confidence range of 0.6 to 1, in Fig. \ref{fig:convRise}(b), the conviction of SRIU has a noticeable change. In contrast, SRIU$_{V6}$ is on the rise.

Fig. \ref{fig:rC} illustrates the impact of the increasing utility ratio constraint on both rule quantity and quality over the Syn20K dataset. As \textit{minRatio} increases from 0.01 to 0.17, the number of mined HUSRs decreases by more than 56\%, reflecting the progressive filtering of extensions that fail to meet the non-decreasing utility requirement. Crucially, this reduction is accompanied by a consistent improvement in average confidence (Avg Confidence), from 0.725 to 0.741, indicating that the retained rules exhibit stronger predictive reliability. These results confirm that enforcing an increasing utility ratio does not merely restrict the search space but actively promotes the discovery of high-quality, actionable sequential patterns.

\begin{figure}[!htbp]
    \centering
    \includegraphics[scale=0.25, trim=50 30 0 20]{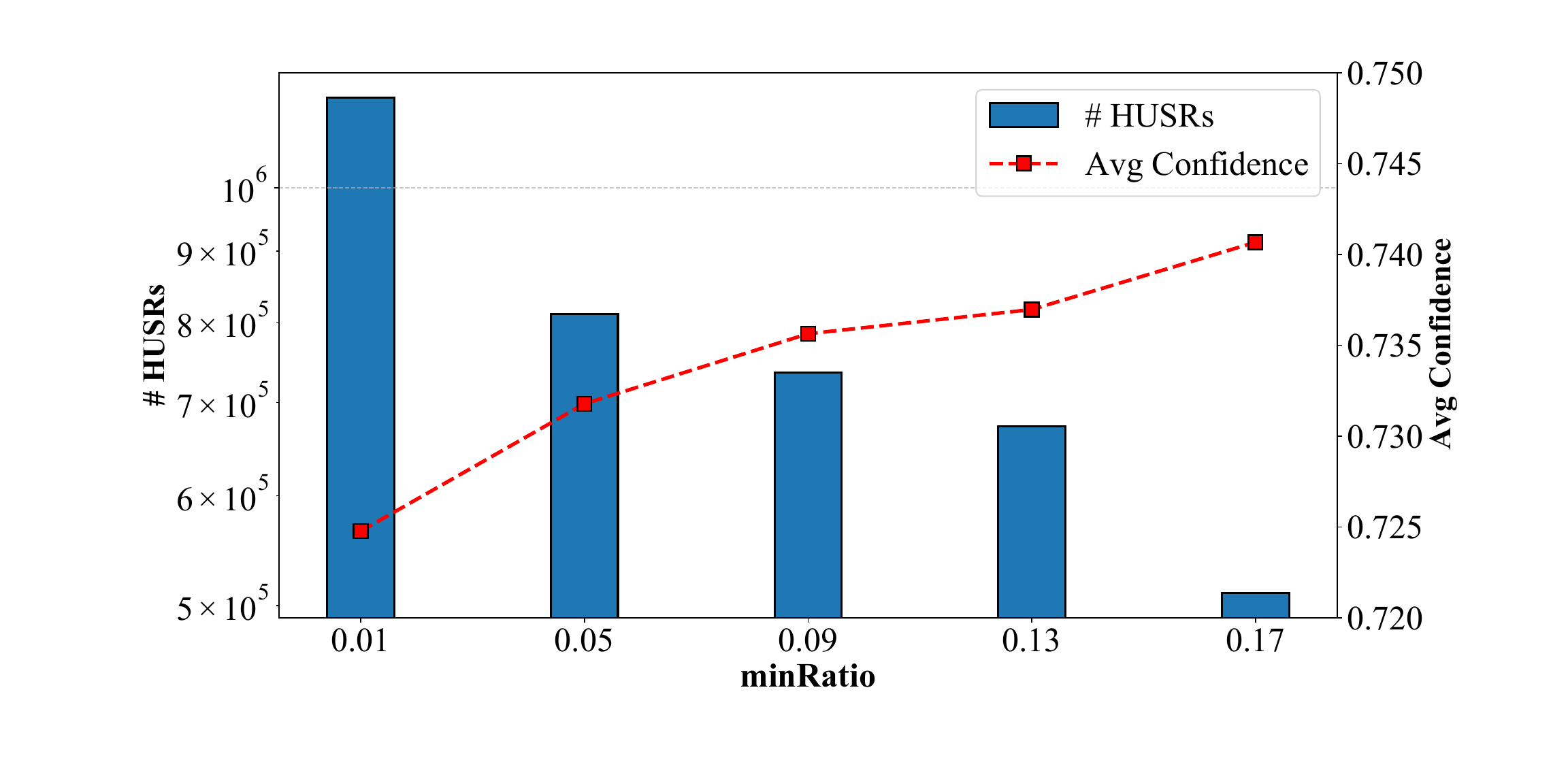}
    \caption{The total number of HUSRs and average confidence.}
    \label{fig:rC}
\end{figure}

\section{Conclusion} \label{sec:conclusion}
		
In this paper, we are the first to propose the SRIU algorithm to discover high-utility sequential rules with increasing utility ratios. A clear distinction is first shown between two distinct methods of expansion: \textit{left-right} expansion and \textit{right-left} expansion. Then, we proposed the $e$-index to determine the optimal expansion strategy. Moreover, we also designed two sets of upper bounds and the corresponding pruning strategies for the two types of expansion. Based on the data structure of IPEUM, SRIU can achieve better performance by pruning the unpromising sequential rules in advance. We also introduced the Roaring Bitmap to store sequence identity to reduce memory consumption, as well as the MaxIS strategy for issues with dense and long sequence datasets. Finally, we conducted a lot of experiments in different settings and evaluated the mining results. In the future, we want to further analyze the items that contribute to the increasing utility ratios since they play a positive role in the sequential rule. These influential items bear significant potential for integration into high-utility target pattern mining tasks, and their application extends to diverse fields, such as market basket analysis and emergency response systems.

\bibliographystyle{IEEEtran}
\bibliography{main}

\end{document}